 \newtheorem{theorem}{Theorem}[section]
  \newtheorem{lemma}[theorem]{Lemma}
  \newtheorem{condition}{Condition}
  \newtheorem{remark}[theorem]{Remark}
\begin{document}

\title{A Small Delay and Correlation Time Limit of Stochastic Differential Delay Equations with                
            State-Dependent Colored Noise}

\date{}

\author[1]{Scott Hottovy}         
\author[2,3]{Austin McDaniel}      
\author[2]{Jan Wehr}

\affil[1]{Department of Mathematics, United States Naval Academy, Annapolis, MD, 21401, USA}
\affil[2]{Department of Mathematics, University of Arizona, Tucson, AZ 85721, USA}
\affil[3]{School of Mathematical and Statistical Sciences, Arizona State University, 
                    Tempe, AZ 85287, USA}

\maketitle

\begin{abstract}
We consider a general stochastic differential delay equation (SDDE) with state-dependent colored noises and derive its limit as the time delays and the correlation times of the noises go to zero.  The work is motivated by an experiment involving an electrical circuit with noisy, delayed feedback.  An Ornstein-Uhlenbeck process is used to model the colored noise. The main methods used in the proof are a theorem about convergence of solutions of stochastic differential 
equations by Kurtz and Protter and a maximal inequality for sums of a stationary sequence of random variables by Peligrad and Utev. 
\end{abstract}

\section{Introduction}
\label{sec:Introduction}

Stochastic differential equations (SDEs) are frequently used to describe the dynamics of physical and biological systems \cite{oksendal}.  However, in situations where a system's response to stimuli is delayed, stochastic differential delay equations (SDDEs) provide more accurate models.  This gain in accuracy comes at a price of greater mathematical difficulty because the theory of SDDEs is much less developed than the theory of SDEs.  Thus, it is useful to develop approximations of SDDEs that are easier to work with than the original equations but still account for the effects of the delay(s).  Such approximations have been used recently to show a phase transition in the collective behavior of robots with sensorial delays \cite{Mijalkov} and a crossover from the It\^o to the Stratonovich equation in the dynamics of an electrical circuit \cite{pesce2013} (see \cite{Volpe2016} for a review).

In this article we consider a general SDDE system, driven by colored (i.e. temporally correlated) noise, which was motivated by an experiment involving an electrical circuit with a delayed feedback mechanism \cite{pesce2013}.   In the experiment, the voltage is driven by a colored noise process with a rapidly decaying correlation function.  An SDE approximation of an SDDE that generalizes the circuit was derived by first performing a Taylor expansion to first order in the time delays and then taking the limit in which the time delays and the correlation times of the noises go to zero at the same rate \cite{pesce2013}. This limiting SDE contains \emph{noise-induced drift} terms which depend on the ratios of the time delays to the noise correlation times.  Convergence of the solution of the equation obtained by Taylor expansion to the solution of the limiting equation was proven later in \cite{mcdaniel2016}.

The present article improves upon the approximation contained in \cite{mcdaniel2016,pesce2013}.  We study the same limit of the SDDE system, but without first performing a Taylor approximation, and thereby get a more accurate result.  Because we do not perform a Taylor expansion, we are able to use a simpler (less smooth) model of the colored noise process than the one used in \cite{mcdaniel2016,pesce2013}; here, we model the colored noise as a stationary solution to an Ornstein-Uhlenbeck SDE \cite{gardiner}.  Our refinement of the results of \cite{mcdaniel2016,pesce2013} can be used in practice as an SDE approximation of a general class of systems with delay.

We consider the multidimensional SDDE system
\begin{equation} \label{SDDE}
d \bm{x} _t = \bm{f}(\bm{x} _t) dt + \bm{\sigma}(\bm{x}_{t - \bm{\delta}}) \bm{\xi} _t dt
\end{equation}
where $\bm{x} _t\in\mathbb{R}^m$ is the state vector, $\bm{f}: \mathbb{R}^m \rightarrow \mathbb{R}^m$ is a vector-valued function describing the deterministic part of the dynamical system, $\bm{\xi} _t\in\mathbb{R}^n$ is a vector of zero-mean independent noises, $\bm{\sigma} : \mathbb{R}^m \rightarrow \mathbb{R}^{m\times n}$
is a matrix-valued function, and $\bm{x} _{t - \bm{\delta}} = \big( (x _{t - \delta _1})_1, ..., (x _{t - \delta _i})_i , ..., (x _{t - \delta _m})_m \big) ^{\rm T}$ (where ${\rm T}$ denotes transpose) is the delayed state vector.
Note that each component is delayed by a possibly different amount $\delta _i \geq 0$. Each of the $n$ independent noises $\xi _j$ is colored and therefore characterized by a correlation time $\tau_j > 0$. That is, assuming stationarity, $E \big[ ({\xi}_t)_j(\xi_s)_j \big] = g \big( |t-s|\tau^{-1}_j \big) $ where $g$ is a function that decays quickly as its argument increases (for $i \neq j$, $E \big[ ({\xi}_t)_i(\xi_s)_j \big] = 0$ by independence).

In the main theorem of the article (Theorem~\ref{theorem}), we consider the case where the components of $\bm{\xi}$ are independent Ornstein-Uhlenbeck colored noises with correlation times $\tau_j$.  That is, we define $\xi _j = y_j$ where $y_j$ is the solution of 
\begin{equation}
 \label{yequation}
d(y _t)_j = - \frac{1}{\tau _j} (y _t)_j dt + \frac{1}{\tau_j} d(W _t)_j
\end{equation}
where $\tau _j > 0$ and $\bm{W}$ is an $n$-dimensional Wiener process.  Equation~\eqref{yequation} has a unique stationary measure and for an arbitrary initial condition the distribution of $\bm{y}_t$ converges to this stationary measure as $t \to \infty$.  The solution of \eqref{yequation} with the initial condition distributed according to the stationary measure defines a stationary process whose realizations will play the role of colored noise in the SDDE system (\ref{SDDE}).  Note that as $\tau _j \rightarrow 0$ for all $j=1,...,n$, $\bm{y}$ converges to an n-dimensional white noise (i.e., its correlation function converges to a delta function).

We study the limit of the system consisting of equations (\ref{SDDE}) and (\ref{yequation}), with $\bm{\xi}_t = \bm{y}_t$, 
assuming that all $\delta _i$ and $\tau _j$ stay proportional to a single characteristic time $\epsilon > 0$ which goes to $0$. 
Thus, we let $ \delta _i = c _i \epsilon \to 0$ and $\tau _j = k_j \epsilon \to 0$ where $c _i ,k _j > 0$ remain constant for all $i,j$ and $\epsilon \to 0$. 

We consider the solution to equations (\ref{SDDE}) and (\ref{yequation}) on a bounded time interval $0 \leq t \leq T$.   We let $(\Omega, \mathcal{F}, P)$ denote the underlying probability space.  We will use the notation $\delta ^* = \max _{1 \leq i \leq m} \delta _i$ and $c^* = \max _{1 \leq i \leq m} c_i = \delta ^* / \epsilon$.  In order to formulate a well-posed problem, because of the delays in (\ref{SDDE}), one needs to specify not only an initial condition but also the values of the process $\bm{x}$ at all past times $t \in [- \delta ^* , 0]$.  Therefore, we assume that there is a $t_- < 0$ such that the values of $\bm{x}$ are initially specified for $t \in [t_-, 0]$ and we only consider delays $\delta _i$ such that $\delta _i < | t_- |$ for all $i$.  Let $\bm{x}^- : \Omega \times [t_-, 0] \rightarrow \mathbb{R}^m$ denote this $\emph{past condition}$ associated with (\ref{SDDE}).  We assume that $\bm{x}^-$ is independent of $\bm{W} _t$ for all $t \geq 0$.  We further assume that $\bm{x}^-$ is defined so that there exists a unique solution to (\ref{SDDE}) with the past condition $\bm{x}^-$.\\

\begin{theorem}
\label{theorem}
Suppose that $\bm{f}$ is continuous and bounded and that $\bm{\sigma}$ is bounded with bounded, Lipschitz continuous first derivatives.  Let $(\bm{x}^{\epsilon} , \bm{y} ^{\epsilon} )$ solve equations (\ref{SDDE}) and (\ref{yequation}) (which depend on $\epsilon$ through $\delta _i, \tau _j$) on $0 \leq t \leq T$ with the past condition $\bm{x} ^-$ the same for every $\epsilon$ and the initial condition $\bm{y} ^{\epsilon} _0$ distributed according to the stationary distribution corresponding to (\ref{yequation}).  Let $\bm{x}$ solve
\begin{equation}
\label{limiting equation}
d\bm{x} _t = \bm{f }(\bm{x}_t) dt +  \bm{S}(\bm{x}_t)\;dt + \bm{\sigma }(\bm{x}_t) d\bm{W}_t 
\end{equation}
on $0 \leq t \leq T$ with the same initial condition $\bm{x}_0=\bm{x}^-_0$, where $\bm{S}$ is 
defined componentwise as
\begin{equation}
\label{eq:Sdef}
	S_i(\bm{x}) = \sum _{p,j} \frac{1}{2} e^{- \frac{\delta _p}{\tau _j}} \sigma_{pj} (\bm{x})  \frac{\partial \sigma _{ij} }{\partial x_p} (\bm{x})
\end{equation}
and suppose strong uniqueness holds on $0 \leq t \leq T$ for (\ref{limiting equation}) with the initial condition $\bm{x}_0$.  Then
\begin{equation}
\lim _{\epsilon \rightarrow 0} P \left[ \sup_{0 \leq t \leq T} \| \bm{x}^{\epsilon}_t - \bm{x} _t \| > a \right] = 0
\end{equation}
for every $a>0$.
\end{theorem}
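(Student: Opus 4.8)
The plan is to rewrite the delay equation \eqref{SDDE} as an ordinary It\^o SDE, up to an explicitly identifiable drift correction and asymptotically negligible remainders, and then to invoke the Kurtz--Protter stability theorem for stochastic differential equations. The starting point is that, since $\bm{x}^\epsilon$ has absolutely continuous paths, $s\mapsto\bm\sigma(\bm{x}^\epsilon_{s-\bm\delta})$ has finite variation and hence zero quadratic covariation with the Ornstein--Uhlenbeck process $\bm{y}^\epsilon$. Applying It\^o's formula to $\sigma_{ij}(\bm{x}^\epsilon_{s-\bm\delta})(y^\epsilon_s)_j$ and using \eqref{yequation} in the form $(y^\epsilon_s)_j\,ds = d(W_s)_j - \tau_j\,d(y^\epsilon_s)_j$, I would obtain, componentwise,
\[
\int_0^t \sum_j \sigma_{ij}(\bm{x}^\epsilon_{s-\bm\delta})(y^\epsilon_s)_j\,ds = \sum_j\int_0^t \sigma_{ij}(\bm{x}^\epsilon_{s-\bm\delta})\,d(W_s)_j + \sum_j\tau_j\int_0^t (y^\epsilon_s)_j\,d\sigma_{ij}(\bm{x}^\epsilon_{s-\bm\delta}) + B^\epsilon_i(t),
\]
where $B^\epsilon_i(t) = -\sum_j\tau_j\big[\sigma_{ij}(\bm{x}^\epsilon_{t-\bm\delta})(y^\epsilon_t)_j - \sigma_{ij}(\bm{x}^\epsilon_{-\bm\delta})(y^\epsilon_0)_j\big]$. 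Expanding $d\sigma_{ij}(\bm{x}^\epsilon_{s-\bm\delta}) = \sum_p\frac{\partial\sigma_{ij}}{\partial x_p}(\bm{x}^\epsilon_{s-\bm\delta})\,d(x^\epsilon_{s-\delta_p})_p$ and substituting \eqref{SDDE} for $d(x^\epsilon_{s-\delta_p})_p$ splits the middle term into a piece carrying an extra factor $\tau_j$ (coming from $\bm f$) and the essential term
\[
C^\epsilon_i(t) = \sum_{p,j,l}\tau_j\int_0^t \frac{\partial\sigma_{ij}}{\partial x_p}(\bm{x}^\epsilon_{s-\bm\delta})\,\sigma_{pl}(\bm{x}^\epsilon_{s-\delta_p-\bm\delta})\,(y^\epsilon_s)_j\,(y^\epsilon_{s-\delta_p})_l\,ds .
\]

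Next I would show that the genuinely negligible pieces vanish uniformly on $[0,T]$ in probability: the boundary term $B^\epsilon_i$, the $\bm f$-piece, and the error from replacing $\bm{x}^\epsilon_{s-\bm\delta}$ by $\bm{x}^\epsilon_s$ inside $\int \sigma_{ij}(\cdot)\,d(W)_j$. Two facts drive this. First, the stationary component $(y^\epsilon_t)_j$ has variance $(2\tau_j)^{-1}=O(1/\epsilon)$, so a Gaussian maximal inequality gives $\sup_{0\le t\le T}\|\bm{y}^\epsilon_t\| = O(\epsilon^{-1/2}\sqrt{\log(1/\epsilon)})$, which after the prefactor $\tau_j=O(\epsilon)$ tends to $0$; this disposes of $B^\epsilon_i$ and the $\bm f$-piece. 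Second, writing $\int_{s-\delta_*}^s (y^\epsilon_u)_j\,du = (W_s)_j - (W_{s-\delta_*})_j - \tau_j[(y^\epsilon_s)_j - (y^\epsilon_{s-\delta_*})_j]$ (with $\delta_*=\max_i\delta_i$) and using the modulus of continuity of $\bm W$ shows $\sup_{0\le t\le T}\|\bm{x}^\epsilon_{t-\bm\delta} - \bm{x}^\epsilon_t\|\to 0$ in probability; a Burkholder--Davis--Gundy estimate, with a localization argument to upgrade convergence in probability to the $L^2$ bound BDG requires, then handles the stochastic-integral error and also lets me replace the remaining delayed arguments by $\bm{x}^\epsilon_s$.

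The crux is the limit of $C^\epsilon_i(t)$. By independence of the noises only the $l=j$ terms contribute in the limit, and a direct computation gives $\tau_j E[(y^\epsilon_s)_j(y^\epsilon_{s-\delta_p})_j] = \tfrac12 e^{-\delta_p/\tau_j} = \tfrac12 e^{-c_p/k_j}$, which is exactly the coefficient in \eqref{eq:Sdef}. To make this precise uniformly in $t$ I would freeze the slowly varying coefficients $\frac{\partial\sigma_{ij}}{\partial x_p}(\bm{x}^\epsilon_{s-\bm\delta})$ and $\sigma_{pj}(\bm{x}^\epsilon_{s-\delta_p-\bm\delta})$ on a mesoscopic partition of $[0,T]$ with mesh $\Delta$ obeying $\tau_j\ll\Delta\ll 1$, and on each block write the centered integral $\int\big[(y^\epsilon_s)_j(y^\epsilon_{s-\delta_p})_j - (2\tau_j)^{-1}e^{-\delta_p/\tau_j}\big]\,ds$ as a partial sum of a mean-zero stationary sequence obtained by sampling at scale $\tau_j$. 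The Peligrad--Utev maximal inequality bounds the maximum of these partial sums over all block endpoints by a constant times $\sqrt{T/\tau_j}$, up to logarithmic corrections, so that after the prefactor $\tau_j$ the fluctuation is $O(\sqrt{\tau_j\log(1/\tau_j)}) = O(\sqrt{\epsilon\log(1/\epsilon)})\to 0$; the coefficient-freezing errors are controlled by the assumed Lipschitz continuity of the first derivatives of $\bm\sigma$ together with the path regularity of $\bm{x}^\epsilon$. This gives $\sup_{0\le t\le T}\big|C^\epsilon_i(t) - \int_0^t S_i(\bm{x}^\epsilon_s)\,ds\big|\to 0$ in probability, with $S_i$ as in \eqref{eq:Sdef}; since $\bm S$ is bounded and Lipschitz, being built from bounded Lipschitz functions and their bounded Lipschitz derivatives, this correction is in the limit the same as $\int_0^t\bm S(\bm{x}_s)\,ds$.

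Assembling the above, $\bm{x}^\epsilon$ satisfies
\[
\bm{x}^\epsilon_t = \bm{x}_0 + \int_0^t\bm f(\bm{x}^\epsilon_s)\,ds + \int_0^t\bm S(\bm{x}^\epsilon_s)\,ds + \int_0^t\bm\sigma(\bm{x}^\epsilon_s)\,d\bm W_s + \bm E^\epsilon(t),\qquad \sup_{0\le t\le T}\|\bm E^\epsilon(t)\|\to 0 \text{ in probability.}
\]
Viewing this as an equation driven by the semimartingale $\bm Z^\epsilon_t = (t,\bm W_t,\bm E^\epsilon(t))$, which converges in probability uniformly on $[0,T]$ to $\bm Z_t = (t,\bm W_t,\bm 0)$ and satisfies the uniform-tightness hypothesis because $\bm f,\bm S,\bm\sigma$ are bounded, the Kurtz--Protter theorem on convergence of solutions of SDEs together with the assumed strong uniqueness for \eqref{limiting equation} yields $\bm{x}^\epsilon\to\bm x$ in probability in the uniform topology on $[0,T]$, which is the assertion $\lim_{\epsilon\to0}P[\sup_{0\le t\le T}\|\bm{x}^\epsilon_t - \bm x_t\| > a] = 0$. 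I expect the third step to be the main obstacle: arranging the two-scale decomposition --- mesoscopic freezing of the $\bm\sigma$-coefficients, microscopic reduction to a stationary sequence --- so that the Peligrad--Utev bound applies cleanly, while simultaneously controlling the coefficient-freezing errors and the delay shifts inside the quadratic noise products $(y^\epsilon_s)_j(y^\epsilon_{s-\delta_p})_l$, all uniformly in $t$.
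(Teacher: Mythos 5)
Your overall architecture is the same as the paper's: solve \eqref{yequation} for $\bm{y}^\epsilon_t\,dt$, integrate by parts against $d\bm{y}^\epsilon$, substitute \eqref{SDDE} back in, identify the coefficient $\tau_jE[(y^\epsilon_s)_j(y^\epsilon_{s-\delta_p})_j]=\tfrac12 e^{-\delta_p/\tau_j}$, kill the boundary and $\bm f$-pieces via the smallness of $\tau_j\bm y^\epsilon$, control the fluctuations of the integrated quadratic noise by Peligrad--Utev after a Riemann-sum discretization, and close with Kurtz--Protter. The computations you cite (the covariance formula, the sign conventions, the $O(\epsilon^{1/2})$ decay of $\sup_t\|\epsilon\bm y^\epsilon_t\|$, the treatment of the delayed arguments via an $L^2$ modulus-of-continuity estimate for $\bm x^\epsilon$) all check out against the paper, which obtains the sup bound on $\epsilon\bm y^\epsilon$ by the Da Prato factorization method rather than your Gaussian maximal inequality; either works.

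The one substantive divergence is exactly the step you flag as the main obstacle, and there your plan has a gap. You propose to prove directly that $\sup_t|C^\epsilon_i(t)-\int_0^tS_i(\bm x^\epsilon_s)\,ds|\to0$ by freezing the coefficients $\frac{\partial\sigma_{ij}}{\partial x_p}\sigma_{p\ell}$ on a mesoscopic partition and applying Peligrad--Utev on each block. But after freezing, the quantity you must control is a sum $\sum_i a_i b_i$ in which the weights $a_i=\Lambda^{j\ell}_{ip}(\bm x^\epsilon_{t_i})$ are random, adapted, and correlated with the block increments $b_i$; Peligrad--Utev applies to the unweighted partial sums $\max_n|\sum_{i\le n}b_i|$, not to the weighted ones, so you would still need an Abel-summation argument, and the total variation of the sampled coefficient sequence along a mesh of size $\Delta$ grows like $\Delta^{-1/2}$, so the bookkeeping of exponents ($\tau_j\ll\Delta\ll1$) has to be done carefully and is not automatic. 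The paper sidesteps this entirely: it puts the centered processes $(G^{j\ell}_t)_p=\int_0^t(k_j\epsilon(y^\epsilon_s)_j(y^\epsilon_{s-c_p\epsilon})_\ell-E[\cdot])\,ds$ into the driving semimartingale $\bm H^\epsilon$ of the Kurtz--Protter theorem, with $\bm\Lambda^{j\ell}(\bm x^\epsilon_s)$ as the integrand $\bm h$. Then only two things are needed: $(G^{j\ell})_p\to0$ uniformly in $L^2$ (which is your unweighted Peligrad--Utev estimate, with $m=N^2$ Riemann points) and stochastic boundedness of the total variation of $(G^{j\ell})_p$ (immediate from the uniform second-moment bound on $\epsilon(y^\epsilon_s)_j(y^\epsilon_{s-c_p\epsilon})_\ell$ via Chebyshev); the theorem then handles the random coefficient for you. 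You already have both ingredients in hand, so the fix is structural rather than analytic: apply Kurtz--Protter with the enlarged driver $(t,\bm W_t,t,\bm G^{11},\dots,\bm G^{nn})$ instead of only $(t,\bm W_t,\bm E^\epsilon)$, and drop the coefficient-freezing step altogether.
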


\vspace{5pt}

An outline of the paper is as follows. In Section \ref{sec:colored noise} we discuss the important modeling 
aspects of a general colored noise process $\bm{\xi}$. In Section \ref{sec:proof of theorem} we prove the main
theorem of the paper. In Section \ref{sec:Discussion}, we discuss how the theorem relates to the previous works \cite{mcdaniel2016,pesce2013}.
We give conclusions in Section \ref{sec:Conclusion}.

\section{Colored noise process}\label{sec:colored noise}

In this section we discuss the specific model of colored noise that we use in this paper, i.e., the Ornstein-Uhlenbeck (OU)  process.  The noise process $\bm{\xi}$ driving the system (\ref{SDDE}) is colored, not white. The terms ``colored'' and ``white'' come from 
the Wiener-Khinchin theorem (see \cite[Section 1.5.2]{gardiner}).  This theorem states that the expected value of the modulus of the Fourier transform of the time series of the stationary noise process is equal to the Fourier transform of its time correlation function.  Thus, a white noise process has a 
constant frequency correlation function in the Fourier domain (much like white light contains all colors of the light spectrum in equal proportions).  A colored noise is 
any (usually mean zero) process $\xi$ with a nonconstant frequency correlation function.  In this paper, we are interested in stationary colored noise processes $\xi$ which have a time correlation function of the form 
\begin{equation*}
	E \big[ \xi_t\xi_s \big] = \frac{1}{\epsilon} g\left (\frac{ |t-s|}{\epsilon}\right ) 
\end{equation*}
where $g$ is a function that decays rapidly as its argument increases.  For small 
$\epsilon>0$, the correlation function can be approximated by the correlation function of white noise, i.e.,
\begin{equation*}
	E \big[ \xi_t\xi_s \big] \rightarrow \delta(t-s)  
\end{equation*}
as $\epsilon\rightarrow0$.  Furthermore, it is typical to use as a model for colored noise a process $\xi$ such that 
\begin{equation*}
	\int_0^t\xi_s\;ds \rightarrow W_t 
\end{equation*}
as $\epsilon\rightarrow0$, in some probabilistic sense.
 
Along with the correlation function $g(r)$, $r\geq0$, the smoothness of 
the noise process is another important consideration when modeling physical noise.  For a mean-zero Gaussian process, the 
covariance function determines the entire law of the process, and hence also the smoothness of its realizations.  Different processes with varying degrees of smoothness have been used for modeling colored noise in the 
literature.  An infinitely differentiable process was used in \cite{freidlin2004} by convoluting a smooth function $h\in C^{\infty}([0,\infty))$
with a Wiener process.  A piecewise differentiable approximation of a Wiener process was constructed in \cite{kurtz91}
which was then differentiated to obtain a colored noise.  In \cite{mcdaniel2016,pesce2013}, a differentiable harmonic noise process 
was used in order to make the solution twice differentiable and thus allow to use its Taylor expansion.

In this paper, we use an Ornstein-Uhlenbeck (OU) process to model the colored noise.  This is an often used simple choice (see, e.g.,\;\cite{hottovyEPL2012,kupferman2004,pavliotis2005,pavliotis}) because it is a continuous process that can be written as the solution to a linear SDE.  An OU process $y$ with paths in  $C([0,T] , \mathbb{R})$ is defined as the solution to the SDE
\begin{equation}	
	\label{eq:generalOU}
	dy_t = -\alpha y_t\;dt + \sigma dW_t \; , \quad y_0 = w \; ,
\end{equation}
where $\alpha,\sigma>0$.  This process $y$ is a Gaussian process with mean $E[y_t] = 
e^{-\alpha t}E[w]$ and autocovariance function \cite[Section 4.5.4]{gardiner}
\begin{equation}
	E \bigg[ \Big( y_t-E[y_t] \Big) \Big( y_s-E[y_s] \Big) \bigg] 
	= \Bigg[ \mathrm{Var}(w) - \frac{\sigma ^2}{2 \alpha} \Bigg] e^{- \alpha (t + s)} + \frac{\sigma^2}{2\alpha}e^{-\alpha |t-s|} \; .
\end{equation}
Furthermore, the OU process is ergodic, that is, it has a unique stationary measure, such that starting from any initial distribution, the system's distribution converges to this measure.  In Section \ref{sec:Introduction}, we defined the colored noises to be $n$ independent, stationary OU processes with correlation times $\tau_j= k_j \epsilon$. In 
other words, $(\xi_t)_j = (y_t^\epsilon)_j$ is the solution to (\ref{eq:generalOU}) with 
$\alpha = \sigma = \tau _j ^{-1}$ and with $(y_0^\epsilon)_j = w_j$ where $w_j$ is a random variable, independent of the process $W$, having this stationary 
distribution, i.e., $w_j$ is Gaussian with mean zero and variance $1/(2k_j \epsilon)$.  

The choice of this particular colored noise is advantageous because equation (\ref{eq:generalOU}) is exactly solvable and every moment of $\bm{y}_t^\epsilon$ can be calculated. Furthermore, the covariance moments $E \left[ \big( (y_t^\epsilon)_j (y_s^{\epsilon})_{j} \big)^n \right]$ can be calculated for all $n\in\mathbb{N}$ by using Wick's theorem \cite[Theorem 1.3.8]{janson}. 
%
%
\begin{figure}[h!]
\begin{center} 
\resizebox{.75\textwidth}{!}{\includegraphics{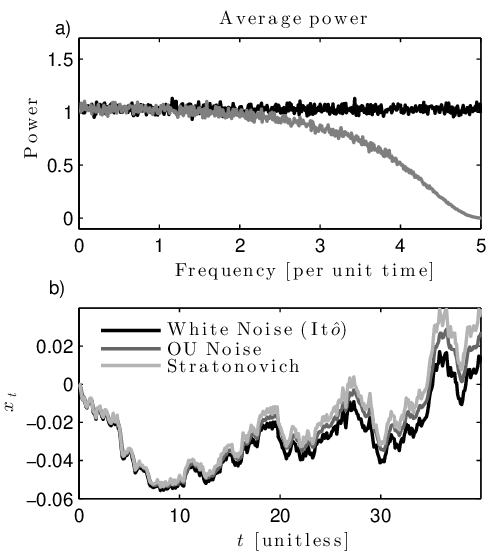}}
\end{center}
\caption{Comparison of white and Ornstein-Uhlenbeck noises. Plot a) is the average power, or modulus of the
 Fourier time series, of each noise. 
Plot b) is a realization of the solution to  system  \eqref{SDDE} with $\delta = 0$, $f(x) = a x$, $\sigma(x) = (bx+c)$, and the initial condition $x_0 = 0$, where $\xi_t$ is 
each type of noise.}
\label{fig:Compare}
\end{figure}
In Figure~\ref{fig:Compare}, white noise is compared to OU noise. White noise (black) is generated by taking the differential of 
a standard Wiener process. Ornstein-Uhlenbeck noise (dark gray) is generated by using the SDE~\eqref{yequation} with $\tau = 5$
 (arbitrary time units). In panel a) the average power as a function of frequency is plotted for each process. This is the Fourier transform 
pair of the correlation function. For the white noise process the average power is constant as a function of frequency, while the average power of the OU process decays rapidly 
after frequency $2$. In panel b), realizations are shown of a one-dimensional example of system~\eqref{SDDE}, with no delay, driven by the noises. 
Notice that the process driven by the OU noise and that driven by the white noise diverge from each other after time $t=20$.  This is a result of the Stratonovich correction that appears in \eqref{limiting equation} when one sets the delay equal to zero.


Theorem~\ref{theorem}, as stated, only applies to the case where the noise is modeled by a stationary OU process. The theorem can be 
modified to handle any noise process that is stationary and that solves a linear SDE with additive (white) noise (e.g. the harmonic noise process in \cite{mcdaniel2016,pesce2013}).  For different noises, the coefficients of the additional drift in the analogous limiting equation depend on the covariance function of the noise.  We expect that, with extra work, stationary noises defined by other SDEs may also be treated.
For a general noise 
$\bm{\xi}$, defined by its covariance function, different methods must be employed to prove 
an analogous theorem.

\section{Proof of Theorem~\ref{theorem}}\label{sec:proof of theorem}

In this section we prove the main theorem of the paper, Theorem~\ref{theorem}.  The main tool that we use is a theory of convergence of solutions of SDEs by Kurtz and Protter \cite{kurtz91}.  A similar technique is used in \cite{hottovy2014} and in \cite{mcdaniel2016} .  The structure of the section is as follows. In Section~\ref{sec:convergenceSI} we introduce the theory of convergence of solutions of SDEs that we will use to prove the theorem.  In Section~\ref{sec:Derive} we use integration by parts and
substitution to write system~(\ref{SDDE}) in the form necessary for applying the Kurtz-Protter theorem. In Section~\ref{sec:Proof} we 
complete the proof of Theorem~\ref{theorem} by verifying that the conditions of the Kurtz-Protter theorem are satisfied. 

We begin with the theory of convergence of solutions of SDEs, where we state (in a less general but sufficient form) a theorem of Kurtz and Protter \cite{kurtz91}.

\subsection{Convergence of solutions of SDEs}\label{sec:convergenceSI}

We fix a probability space $(\Omega,\mathcal{F},P)$ and an $n$-dimensional Wiener process $\bm{W}$ on it.  Let $\mathcal{F}_t$ will be (the usual augmentation of) $\sigma(\{\bm{W}_s:s\leq t\})$, the filtration generated by $W$ up to time $t$ \cite{revuz}.

Suppose $\bm{H}$ is an $\{\mathcal{F}_t\}$-adapted semimartingale with paths in $C([0,T] , \mathbb{R}^k)$, whose Doob-Meyer decomposition is $\bm{H}_t = \bm{M}_t + \bm{A}_t$ so that $\bm{M}$ is an $\mathcal{F}_t$-local martingale and $\bm{A}$ is a process of locally bounded variation, such that $\bm{A}_0 = 0$ \cite{revuz}.  For a continuous $\{\mathcal{F}_t\}$-adapted process $\bm{Y}$ with paths in $C([0,T] , \mathbb{R}^{d \times k})$ and for $t \leq T$ consider the It\^o integral 
\begin{equation}\label{eq:defint}
\int_0^t \bm{Y}_s\,d\bm{H}_s = \lim \sum_{i}\bm{Y}_{t_i} \left( \bm{H}_{t_{i+1}} - \bm{H}_{t_i} \right) \; ,
\end{equation}
where $\{t_i\}$ is a partition of $[0,t]$ and the limit is taken as the maximum of $t_{i+1}-t_i$ goes to zero. For a continuous processes $\bm{Y}$ such that 
\begin{equation*}
P\left[ \int_0^T \left\| \bm{Y}_s \right\| ^2 \,d\langle \bm{M} \rangle_s + \int_0^T \| \bm{Y}_s \| \,dV_s(\bm{A}) <\infty\right] = 1 \; , 
\end{equation*}
where $\langle \bm{M} \rangle_s$ is the quadratic variation of $\bm{M}$ and $V_s(\bm{A})$ is the total variation of $\bm{A}$, the limit in equation~(\ref{eq:defint}) exists in the sense that 
$$
\sup_{0 \leq t \leq T} \left \| \int_0^t \bm{Y}_s\,d\bm{H}_s - \sum_{i}\bm{Y}_{t_i} \left( \bm{H}_{t_{i+1}} - \bm{H}_{t_i} \right) \right \| \rightarrow 0 
$$
in probability.  This and related convergence modes will be used throughout the paper \cite{protter}.

Consider $(\bm{U}^\epsilon,\bm{H}^\epsilon)$ with paths in $C([0,T], \mathbb{R}^m\times\mathbb{R}^k)$ adapted to $\{\mathcal{F}_t\}$ where $\bm{H}^\epsilon$ is a semimartingale with respect to $\mathcal{F}_t$. Let $\bm{H}^\epsilon_t = \bm{M}_t^\epsilon + \bm{A}_t^\epsilon$ be its Doob-Meyer decomposition.  Let $\bm{h}:\mathbb{R}^{k} \rightarrow \mathbb{R}^{k \times n}$ be a {matrix-valued} function and let $\bm{X}^\epsilon$, with paths in $C([0,T], \mathbb{R}^m)$, satisfy the SDE 
\begin{equation}\label{eq:KPthm1}
\bm{X}_{t}^{\epsilon} = \bm{X}_0 + \bm{U}_{t}^\epsilon + \int_0^{t} \bm{h}(\bm{X}_s^{\epsilon})\,d\bm{H}_s^\epsilon \; ,
\end{equation}
where $\bm{X}_0^{\epsilon} = \bm{X}_0 \in\mathbb{R}^m$ is the same initial condition for all $\epsilon$. Define $\bm{X}$, with paths in $C([0,T], \mathbb{R}^m)$, to be the solution of
\begin{equation}\label{eq:KPlimit}
\bm{X}_t = \bm{X}_0 + \int_0^t \bm{h}(\bm{X}_s)\,d\bm{H}_s \; .
\end{equation}
Note that (\ref{eq:KPthm1}) implies $\bm{U}_0^\epsilon = \bm{0}$ for all $\epsilon$. 

\begin{lemma}\cite[Theorem 5.4 and Corollary 5.6]{kurtz91}
\label{theorem:KP} 
Suppose $(\bm{U}^\epsilon,\bm{H}^\epsilon)\rightarrow (\bm{0},\bm{H})$ in probability with respect to $C([0,T], \mathbb{R}^m\times\mathbb{R}^k)$, i.e., for all $a > 0$, 
\begin{equation}\label{eq:defofprob}
P \left[ \sup_{0\leq s \leq T} \Big( \| \bm{U}_s^\epsilon \| + \| \bm{H}_s^\epsilon-\bm{H}_s \| \Big) > a \right] \rightarrow 0 
\end{equation}
as $\epsilon \rightarrow 0$, and the following conditions are satisfied:

\begin{condition}\label{condition}[Tightness condition]
The family of total variations evaluated at $T$, $\{V_T(\bm{A}^\epsilon)\}$, is stochastically bounded, i.e., $P[V_T(\bm{A}^\epsilon)>L]\rightarrow 0$ as $L\rightarrow\infty$, uniformly in $\epsilon$.
\end{condition}

\begin{condition}\label{condition2}
$\bm{h}$ is continuous (see \cite[Example 5.3]{kurtz91}).
\end{condition}

Suppose that there exists a strongly unique global solution to equation~(\ref{eq:KPlimit}). Then, as $\epsilon\rightarrow0$, $\bm{X}^\epsilon$ converges to $\bm{X}$, the solution of equation~(\ref{eq:KPlimit}), in probability with respect to $C([0,T] , \mathbb{R}^m)$.
\end{lemma}

\subsection{Derivation of the limiting equation}
\label{sec:Derive}

\begin{proof}[Proof of Theorem \ref{theorem}]
To write system (\ref{SDDE}) (with $\bm{\xi} _t = \bm{y} _t ^{\epsilon}$) in the form of (\ref{eq:KPthm1}), we will use integration by parts 
and substitution. To this end, 
we write equation (\ref{yequation}) in matrix form.  Recalling that $\tau _j = k_j \epsilon$ and defining
\begin{equation*} \bm{D} =
\left[\rule{0cm}{1.4cm}\right. \begin{array}{cccc}
\frac{1}{k_1} & 0 & ... & 0  \\[3pt]
0 & \frac{1}{k_2} & ... & 0\\
\vdots & \vdots & \ddots & \vdots \\[3pt]
0 & 0 & ... & \frac{1}{k_n}
\end{array} \left.\rule{0cm}{1.4cm}\right] \; ,
\end{equation*}
equation (\ref{yequation}) becomes
\begin{equation}
\label{ymatrixform}
d \bm{y} ^\epsilon _t = \frac{\bm{D}}{\epsilon} \Big( - \bm{y} ^{\epsilon} _t dt + d \bm{W} _t \Big) \; .
\end{equation}
We solve equation (\ref{ymatrixform}) for $\bm{y}^{\epsilon} _t dt$ and substitute it into equation (\ref{SDDE}) to obtain
$$d \bm{x} ^{\epsilon} _t = \bm{f}(\bm{x} ^{\epsilon} _t) dt + \bm{\sigma}(\bm{x} ^{\epsilon}_{t - \bm{c} \epsilon}) d \bm{W} _t - \bm{\sigma}(\bm{x} ^{\epsilon}_{t - \bm{c} \epsilon }) \epsilon \bm{D} ^{-1} d \bm{y} ^{\epsilon} _t$$
where $\bm{x} ^{\epsilon}_{t - \bm{c} \epsilon} = \big( (x ^{\epsilon}_{t - c_1 \epsilon})_1 , ..., (x ^{\epsilon}_{t - c_i \epsilon})_i , ..., (x ^{\epsilon}_{t - c_m \epsilon})_m \big) ^T$.  In integral form, this equation becomes
\begin{equation}
\label{eq:PreParts}
\bm{x} ^{\epsilon} _t = \bm{x} _0 + \int_0 ^t \bm{f}(\bm{x} ^{\epsilon} _s) ds + \int_0 ^t \bm{\sigma}(\bm{x} ^{\epsilon} _{s - \bm{c} \epsilon}) d\bm{W} _s - \int_0 ^t \bm{\sigma}(\bm{x} ^{\epsilon} _{s - \bm{c} \epsilon}) \epsilon \bm{D} ^{-1} d \bm{y} ^{\epsilon} _s \; .
\end{equation}

In the limit as $\epsilon\rightarrow0$, we expect the second and the third terms on the right-hand side of equation (\ref{eq:PreParts})
to converge to the analogous terms of the limiting equation (\ref{limiting equation}). In addition, 
$\epsilon \bm{y}^\epsilon$ goes to zero as $\epsilon\rightarrow 0$, as we will show later (see Lemma~\ref{lemma:stochasticconvolution}).
Because of this, one might expect  the last term of the right-hand side to converge to zero as well. This is the case when $\bm{\sigma}(\bm{x})=\bm{\sigma}$ is a constant function. For non-constant $\bm{\sigma}(\bm{x})$, in order to be able to apply Lemma \ref{theorem:KP} directly  the process 
$\epsilon \bm{y} ^\epsilon$ would have to satisfy Condition \ref{condition}.  This is not the case, nor is it true 
for any family of colored noise
processes which converge to white noise (see \cite{kurtz91}). 

To resolve this issue, we first split the last integral into a part that involves values of the past condition and a part that does not.
We then integrate by parts the $i^{\mathrm{th}}$ component of the latter integral to obtain
\begin{align*}
(x ^{\epsilon} _t)_i = ( & x _0)_i + \int_0 ^t f_i (\bm{x} ^{\epsilon} _s) ds + \left( \int_0 ^t \bm{\sigma}(\bm{x} ^{\epsilon} _{s - \bm{c} \epsilon }) d \bm{W} _s \right) _i \\
& - \left( \int_0 ^{c^* \epsilon} \bm{\sigma}(\bm{x} ^{\epsilon} _{s - \bm{c} \epsilon}) \epsilon \bm{D} ^{-1} d \bm{y} ^{\epsilon} _s \right) _i 
  - \sum _j \sigma_{ij} (\bm{x} ^{\epsilon} _{t - \bm{c} \epsilon }) k_j \epsilon (y ^{\epsilon} _t)_j \\
 &+ \sum_j \sigma_{ij}(\bm{x} ^{\epsilon} _{ c^* \epsilon - \bm{c} \epsilon }) k _j \epsilon (y_{c^* \epsilon} ^{\epsilon})_j + \int_{c^* \epsilon}^t \sum _{p,j} \frac{\partial \sigma_{ij}}{\partial x_p} (\bm{x} ^{\epsilon} _{s - \bm{c} \epsilon }) k _j \epsilon (y ^{\epsilon} _s)_j d (x ^{\epsilon} _{s - c_p \epsilon })_p \; .
\end{align*}
Recall that $c^* \epsilon$ denotes the largest delay.  We note that the last term is a Lebesgue integral since $d(x_{s - c_p \epsilon}^{\epsilon})_p$ is the differential of a continuous process with bounded variation.  We substitute equation (\ref{SDDE}) into the term containing $d(x ^{\epsilon}_{s - c_p \epsilon })_p $ to obtain
\begin{align}
\label{xequationaftersub}
(x ^{\epsilon} _t)_i = (x_0)_i &+ \int_0 ^t f_i(\bm{x} ^{\epsilon} _s) ds + \left( \int_0 ^t \bm{\sigma}(\bm{x} ^{\epsilon} _{s - \bm{c} \epsilon }) d \bm{W} _s \right) _i \hspace{130pt} \notag \\[2pt]
&+ \int_{c^* \epsilon}^t \sum _{p, j} \frac{\partial \sigma_{ij}}{\partial x_p}(\bm{x} ^{\epsilon} _{s - \bm{c} \epsilon }) k _j \epsilon (y ^{\epsilon} _s)_j f_p (\bm{x} ^{\epsilon} _{s - c_p \epsilon }) ds \notag \\[2pt]
&+ \int_{c^* \epsilon}^t \sum_{p, j, \ell} \frac{\partial \sigma_{ij} }{\partial x_p} (\bm{x} ^{\epsilon} _{s - \bm{c} \epsilon }) k _j \epsilon (y ^{\epsilon} _s)_j \sigma_{p \ell} (\bm{x} ^{\epsilon} _{s - c_p \epsilon - \bm{c} \epsilon }) (y ^{\epsilon}_{s - c_p \epsilon })_{\ell} ds \notag \\[3pt]
& - \left( \int_0 ^{c^* \epsilon} \bm{\sigma}(\bm{x} ^{\epsilon} _{s - \bm{c} \epsilon}) \epsilon \bm{D} ^{-1} d \bm{y} ^{\epsilon} _s \right) _i \notag \\[2pt]
& - \sum _j \sigma_{ij} (\bm{x} ^{\epsilon} _{t - \bm{c} \epsilon }) k_j \epsilon (y ^{\epsilon} _t)_j + \sum _j \sigma_{ij} (\bm{x} ^{\epsilon} _{ c^* \epsilon - \bm{c} \epsilon }) k_j \epsilon (y_{c^* \epsilon} ^{\epsilon})_j
\end{align}
where $\bm{x} ^{\epsilon} _{s - c_p \epsilon } = \big( (x ^{\epsilon}_{s - c_p \epsilon })_1  , ... , (x ^{\epsilon}_{s - c_p \epsilon })_i  , ... , (x ^{\epsilon} _{s - c_p \epsilon })_m \big) ^{\rm T}$ and \\ $\bm{x} ^{\epsilon} _{s - c_p \epsilon - \bm{c} \epsilon } = \big( (x ^{\epsilon}_{s - c_p \epsilon - c_1 \epsilon } )_1 , ... , (x ^{\epsilon}_{s - c_p \epsilon - c_i \epsilon })_i  , ... , (x ^{\epsilon}_{s - c_p \epsilon - c_m \epsilon })_m \big) ^{\rm T}$.

\vspace{5pt}

In the above equation, we will show that the boundary terms and the second Lebesgue integral (the one whose integrand contains the factor $f_p (\bm{x} ^{\epsilon} _{s - c_p \epsilon })$)
go to zero because $\epsilon \bm{y}^\epsilon$ goes to zero. To prove convergence, 
we add and subtract like terms without delays on the right-hand side of the above equation. In addition, 
we add and subtract the integral of $S _i (\bm{x})$, where $S _i (\bm{x})$ is defined in equation~(\ref{eq:Sdef}).  In the resulting equation for $\bm{x}^{\epsilon}$, we keep with the notation of Lemma \ref{theorem:KP} by collecting the terms which we will show directly go to zero into a new process called $\bm{U}^{\epsilon}$.  At this point, it is convenient to extend the process $\bm{y} ^{\epsilon}$ so that $\bm{y} ^{\epsilon} _s$ is defined for $s \geq t_-$, where $t_- < 0$ was introduced before 
the statement of Theorem~\ref{theorem}.  We do this in such a way so that 
$\bm{y} ^{\epsilon}$ considered on the interval $s \geq t_-$ is a stationary process.

Let $\bm{U}^{\epsilon}$ be defined componentwise as
\begin{align}
\label{eq:U}
(U^{\epsilon} _t)_i = \; & - \sum _j \sigma_{ij} (\bm{x} ^{\epsilon} _{t - \bm{c} \epsilon }) k_j \epsilon (y ^{\epsilon} _t)_j + \sum _j \sigma_{ij} (\bm{x} ^{\epsilon} _{ c^* \epsilon - \bm{c} \epsilon }) k_j \epsilon (y_{c^* \epsilon} ^{\epsilon})_j \notag \hspace{110pt} \\
& - \left( \int_0 ^{c^* \epsilon} \bm{\sigma}(\bm{x} ^{\epsilon} _{s - \bm{c} \epsilon}) \epsilon \bm{D} ^{-1} d \bm{y} ^{\epsilon} _s \right) _i \notag \\
& + \left( \int_0 ^t \Big( \bm{\sigma}(\bm{x} ^{\epsilon} _{s - \bm{c} \epsilon }) - \bm{\sigma}(\bm{x}_{s}^\epsilon) \Big)  d \bm{W} _s \right) _i \notag \\[2pt]
&+ \int_{c^* \epsilon}^t \sum _{p, j} \frac{\partial \sigma_{ij}}{\partial x_p} (\bm{x} ^{\epsilon} _{s - \bm{c} \epsilon }) k _j \epsilon (y ^{\epsilon} _s)_j f_p (\bm{x} ^{\epsilon} _{s - c_p \epsilon }) ds \\[1pt]
&+ \int_{c^* \epsilon}^t \sum_{p, j, \ell} \Bigg[ \frac{\partial \sigma_{ij}}{\partial x_p} (\bm{x} ^{\epsilon} _{s - \bm{c} \epsilon }) \sigma_{p \ell} (\bm{x} ^{\epsilon} _{s - c_p \epsilon - \bm{c} \epsilon }) \notag \\
& \hspace{110pt} - \frac{\partial \sigma_{ij}}{\partial x_p} (\bm{x} ^{\epsilon} _{s}) \sigma_{p \ell} (\bm{x} ^{\epsilon} _{s}) \Bigg] k _j \epsilon (y ^{\epsilon} _s)_j (y ^{\epsilon}_{s - c_p \epsilon })_{\ell}  ds \notag \\
&- \int_0^{c^* \epsilon} \sum_{p, j, \ell} \frac{\partial \sigma_{ij}}{\partial x_p} (\bm{x} ^{\epsilon} _s) \sigma_{p \ell} (\bm{x} ^{\epsilon} _s ) k _j \epsilon (y ^{\epsilon} _s)_j (y ^{\epsilon}_{s - c_p \epsilon })_{\ell}  ds \; . \notag 
\end{align}
Then $\bm{x}^\epsilon_t$ can be written componentwise as
\begin{align}
\label{xepsilon equation}
(x ^{\epsilon} _t)_i = \; & (x_0)_i + (U^{\epsilon} _t)_i + \int_0 ^t f_i(\bm{x} ^{\epsilon} _s) ds + \left( \int_0 ^t \bm{\sigma}(\bm{x} ^{\epsilon} _{s}) d \bm{W} _s \right) _i \notag \hspace{105pt} \\[3pt]
&+ \int_0^t \sum_{p, j} \frac{1}{2} e^{- \frac{c_p}{k_j}} \frac{\partial \sigma_{ij}}{\partial x_p}  (\bm{x} ^{\epsilon} _{s}) \sigma_{pj} (\bm{x} ^{\epsilon} _{s})  ds \notag \\[3pt]
&+ \int_0^t \sum_{p, j}  \frac{\partial \sigma_{ij}}{\partial x_p} (\bm{x} ^{\epsilon} _{s}) \sigma_{pj} (\bm{x} ^{\epsilon} _{s}) \left( k _j \epsilon (y ^{\epsilon} _s)_j (y ^{\epsilon} _{s - c_p \epsilon })_j - \frac{1}{2} e^{- \frac{c_p}{k_j}} \right) ds \notag \\[3pt]
&+ \int_0^t \sum_{\substack{p, j, \ell \\ j \neq \ell}}  \frac{\partial \sigma_{ij}}{\partial x_p} (\bm{x} ^{\epsilon} _{s}) \sigma_{p \ell} (\bm{x} ^{\epsilon} _{s}) k _j \epsilon (y ^{\epsilon} _s)_j (y ^{\epsilon}_{s - c_p \epsilon })_{\ell}   ds \; .
\end{align}
We now write equation (\ref{xepsilon equation}) in the form (\ref{eq:KPthm1}) of Lemma \ref{theorem:KP} by letting $\bm{h} : \mathbb{R} ^m \rightarrow \mathbb{R}^{m \times (1 + n + 1 +mn^2)}$ be the matrix-valued function given by
\begin{align} 
\label{hdefinition}
\bm{h} (\bm{X}) = \Big( \bm{f}(\bm{X})&, \bm{\sigma}(\bm{X}), \bm{S}(\bm{X}), \bm{\Lambda}^{11} (\bm{X}) , ... , \bm{\Lambda} ^{1n} (\bm{X}), \notag \hspace{135pt} \\
& \hspace{50pt} \bm{\Lambda}^{21} (\bm{X}) , ..., \bm{\Lambda}^{2n} (\bm{X}), \bm{\Lambda}^{n1} (\bm{X}), ..., \bm{\Lambda}^{nn} (\bm{X}) \Big)
\end{align}
where $\bm{S} : \mathbb{R} ^m \rightarrow \mathbb{R}^{m}$ is the vector-valued function defined in equation~(\ref{eq:Sdef})
and $\bm{\Lambda}^{j \ell} : \mathbb{R} ^m \rightarrow \mathbb{R}^{m \times m}$ is defined componentwise as
\begin{equation*}
\Lambda ^{j \ell} _{i p} (\bm{X}) = \frac{\partial \sigma_{ij}}{\partial X_p} (\bm{X}) \sigma_{p \ell} (\bm{X})
\end{equation*}
and by letting $\bm{H}^\epsilon$ be the process, with paths in $C \big( [0,T], \mathbb{R}^{(1 + n + 1 + mn^2)} \big)$, given by
\begin{equation}
\label{eq:DefH}
\bm{H} _t^\epsilon =
\left[\begin{array}{c}
t  \\
\bm{W}_t \\
t  \\
\bm{G}^{11} _t \\
\vdots \\
\bm{G}^{nn} _t
\end{array}\right] ,
\end{equation}
where $\bm{G}^{j \ell}$ is the process, with paths in $C([0,T] , \mathbb{R}^m)$, given by
\begin{equation}
\label{eq:DefG}
\big( G ^{j \ell} _t \big) _p = \int _0 ^t \bigg( k _j \epsilon ({y} ^{\epsilon} _s)_j ({y} ^{\epsilon}_{s - c_p \epsilon })_{\ell}  -E \Big[ k_j\epsilon(y_s^\epsilon)_j ({y} ^{\epsilon}_{s - c_p \epsilon })_{\ell} \Big] \bigg)ds \; . 
\end{equation}
Note that the expectation in the integrand in (\ref{eq:DefG}) is equal to 
$\frac{1}{2}e^{- c_p / k_j }$ for $j=\ell$ and equal to zero for $j\neq \ell$.

We will show that the processes $\bm{G} ^{j\ell}$ converge to zero in $L^2$ with respect to $C([0,T] , \mathbb{R}^m)$ for all 
$j,\ell=1,...,n$  (Lemma~\ref{lemma:uniformaverage}). Thus, the limiting process $\bm{H}$ is given by
\begin{equation}
	\bm{H}_t = \left[\begin{array}{c}
t  \\
\bm{W}_t \\
t  \\
\bm{0} \\
\vdots \\
\bm{0}
\end{array}\right] .
\end{equation}
We show in the next subsection that $\bm{U}^{\epsilon}, \bm{H}^{\epsilon}, \bm{H},$ and $\bm{h}$ satisfy the assumptions of Lemma~\ref{theorem:KP}.  Thus, given the definitions of $\bm{h}$ and $\bm{H}$ above, by computing the right-hand side of
\begin{equation}
\bm{x}_t = \bm{x}_0 + \int_0^t \bm{h}(\bm{x}_s)\;d\bm{H}_s
\end{equation}
we get the limiting SDE (\ref{limiting equation}).

\subsection{Verifying Conditions of Lemma~\ref{theorem:KP}}

\label{sec:Proof}
In this subsection, we complete the proof of Theorem \ref{theorem} by showing that the conditions of Lemma~\ref{theorem:KP} are satisfied.  That is, we show $\bm{U}^\epsilon\rightarrow \bm{0}$ in probability with respect to $C([0,T], \mathbb{R}^m)$, $\bm{H}^\epsilon\rightarrow \bm{H}$ in probability with respect to  $C([0,T], \mathbb{R} ^{(1 + n + 1 + m n^2)})$, and Conditions \ref{condition} and \ref{condition2} are satisfied.
We begin with lemmas that we will need.  First we show that $\epsilon \bm{y}^\epsilon\rightarrow \bm{0}$ in $L^2$ with respect to $C([0,T] , \mathbb{R}^n) $. 
Nelson showed a similar result, namely that $\sup_{0\leq t\leq T}\|\epsilon \bm{y}_t^\epsilon\| \rightarrow 0$ as $\epsilon\rightarrow0$ with probability one \cite{nelson}. 

\begin{lemma}
\label{lemma:stochasticconvolution}
For each $\epsilon > 0$, let $\bm{y}^\epsilon$ be the solution to equation~(\ref{yequation}) with $\bm{y}^\epsilon _0$ distributed according to the stationary distribution corresponding to (\ref{yequation}).  There exists $C'>0$ independent of $\epsilon$ such that 
\begin{equation}
\label{eq:Conv}
 E \left[ \left( \sup_{0 \leq t \leq T} \left| \int _0 ^t e^{-\frac{(t-s)}{k_j \epsilon}} \; d(W_s)_j \right| \right) ^2 \right] \leq C' \epsilon ^{1 / 2} \; ,
\end{equation}
and this implies that there exists $C > 0$ independent of $\epsilon$ such that
\begin{equation}
\label{ybound}
E\left [\sup_{0\leq t\leq T} \left\| \epsilon \bm{y}_t^\epsilon \right\| ^2\right ] \leq C \epsilon ^{1 / 2} \; .
\end{equation}
Thus, as
$\epsilon\rightarrow0$, $\epsilon \bm{y}^\epsilon\rightarrow \bm{0}$ in $L^2$, and therefore in probability, with respect
to $C([0,T] , \mathbb{R}^n)$.
\end{lemma}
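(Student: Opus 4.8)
The plan is to prove the stochastic-convolution estimate \eqref{eq:Conv} first and then read \eqref{ybound} off from it, using the explicit solution of \eqref{yequation},
\begin{equation*}
(y^\epsilon_t)_j = e^{-t/(k_j\epsilon)}(y^\epsilon_0)_j + \frac{1}{k_j\epsilon}\,\eta^{(j)}_t,\qquad \eta^{(j)}_t := \int_0^t e^{-(t-s)/(k_j\epsilon)}\,d(W_s)_j,
\end{equation*}
which gives $\epsilon(y^\epsilon_t)_j = \epsilon\,e^{-t/(k_j\epsilon)}(y^\epsilon_0)_j + k_j^{-1}\eta^{(j)}_t$. All of the difficulty is in controlling $\sup_{0\le t\le T}|\eta^{(j)}_t|$: the kernel $e^{-(t-s)/(k_j\epsilon)}$ depends on the running upper limit $t$, so $\eta^{(j)}$ is \emph{not} a martingale and Doob's maximal inequality does not apply to it directly --- this is the one genuinely delicate point.

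To get around it I would write $\eta^{(j)}_t = e^{-t/(k_j\epsilon)}Z^{(j)}_t$ with $Z^{(j)}_t := \int_0^t e^{s/(k_j\epsilon)}\,d(W_s)_j$, which \emph{is} a continuous Gaussian martingale with deterministic quadratic variation $\langle Z^{(j)}\rangle_t = \tfrac{k_j\epsilon}{2}(e^{2t/(k_j\epsilon)}-1)$, and then discretize the decaying envelope. Partition $[0,T]$ by $t_i=i\,k_j\epsilon$, $i=0,1,\dots,N$, $N=\lceil T/(k_j\epsilon)\rceil$. On each cell $[t_{i-1},t_i]$, monotonicity of $t\mapsto e^{-t/(k_j\epsilon)}$ gives $\sup_{t\in[t_{i-1},t_i]}|\eta^{(j)}_t|\le e^{-(i-1)}\sup_{0\le s\le t_i}|Z^{(j)}_s|$, hence $\sup_{0\le t\le T}|\eta^{(j)}_t|\le \max_{1\le i\le N}e^{-(i-1)}Z^*_i$ with $Z^*_i:=\sup_{s\le t_i}|Z^{(j)}_s|$. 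Doob's $L^2$ inequality with the explicit $\langle Z^{(j)}\rangle_{t_i}$ shows $E[(e^{-(i-1)}Z^*_i)^2]=O(\epsilon)$ uniformly in $i$; and since $Z^{(j)}$ is a time-changed Brownian motion, the reflection principle gives the uniform sub-Gaussian tail $P(e^{-(i-1)}Z^*_i>x)\le 2\exp(-x^2/(e^2k_j\epsilon))$. A union bound over the $N$ cells and integration of this tail then give $E[(\sup_{0\le t\le T}|\eta^{(j)}_t|)^2]\le C k_j\epsilon(1+\log N)=O(\epsilon\log(1/\epsilon))$, which is $\le C'\epsilon^{1/2}$ once $\epsilon$ is small, proving \eqref{eq:Conv}. (One could reach the same $O(\epsilon\log(1/\epsilon))$ rate --- hence \eqref{eq:Conv} --- instead via the factorization/stochastic-Fubini identity for stochastic convolutions, or via Dudley's entropy bound together with Gaussian concentration; in each route the substance is the mesh choice and the logarithmic cost of the maximum over $\sim 1/\epsilon$ cells.)

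Given \eqref{eq:Conv}, estimate \eqref{ybound} is immediate: in $\epsilon(y^\epsilon_t)_j = \epsilon\,e^{-t/(k_j\epsilon)}(y^\epsilon_0)_j + k_j^{-1}\eta^{(j)}_t$ the first term is maximized at $t=0$ and, since $(y^\epsilon_0)_j$ is Gaussian with variance $1/(2k_j\epsilon)$, $E[\sup_t(\epsilon\,e^{-t/(k_j\epsilon)}(y^\epsilon_0)_j)^2]=\epsilon^2/(2k_j\epsilon)=O(\epsilon)$; the second term is controlled by \eqref{eq:Conv}. Applying $(a+b)^2\le 2a^2+2b^2$ and summing over $j=1,\dots,n$ gives $E[\sup_{0\le t\le T}\|\epsilon\bm{y}^\epsilon_t\|^2]\le C\epsilon^{1/2}$. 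This is precisely $L^2$ convergence of $\epsilon\bm{y}^\epsilon$ to $\bm{0}$ in $C([0,T],\mathbb{R}^n)$, and convergence in probability follows by Chebyshev's inequality. The main obstacle, as noted, is the non-martingale structure of $\eta^{(j)}$; the resolution is to take the mesh comparable to $k_j\epsilon$ so that the envelope drops by a fixed factor per cell (each cell then contributing only $O(\epsilon)$) while the $\sim 1/\epsilon$ cells cost merely a logarithm through the maximal inequality for Gaussians.
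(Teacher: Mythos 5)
Your proof is correct, and it takes a genuinely different route from the paper's. The paper establishes \eqref{eq:Conv} by the factorization (stochastic Fubini) method for stochastic convolutions, following Pavliotis--Stuart: it writes $I(t)=\frac{\sin(\pi\alpha)}{\pi}\int_0^t e^{-(t-s)/(k_j\epsilon)}(t-s)^{\alpha-1}Y(s)\,ds$ with $Y(s)=\int_0^s e^{-(s-u)/(k_j\epsilon)}(s-u)^{-\alpha}\,d(W_u)_j$, pulls the supremum through a H\"older bound, computes $E[|Y(s)|^4]$ exactly from Gaussianity and the It\^o isometry, and obtains $E[\sup_t|I(t)|^4]\le C\epsilon$, whence \eqref{eq:Conv} by Cauchy--Schwarz. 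You instead factor out the decaying envelope, $\eta_t=e^{-t/(k_j\epsilon)}Z_t$ with $Z$ a Gaussian martingale of explicit quadratic variation, and control the supremum by blocking $[0,T]$ into $O(1/\epsilon)$ cells of width $k_j\epsilon$, combining Doob's inequality and a sub-Gaussian tail bound on each cell with a union bound over cells. Both approaches correctly isolate and resolve the one real difficulty here --- the convolution is not a martingale because the kernel depends on the upper limit --- and yours is more elementary (no fractional kernels or choice of $\alpha$, $m$) while yielding the sharper rate $O(\epsilon\log(1/\epsilon))$ in place of $O(\epsilon^{1/2})$. Your passage from \eqref{eq:Conv} to \eqref{ybound} via the explicit solution formula and the stationary variance $E[((y_0^\epsilon)_j)^2]=(2k_j\epsilon)^{-1}$ is exactly the paper's. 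One cosmetic caveat: your final bound $C\epsilon(1+\log N)$ is dominated by $C'\epsilon^{1/2}$ uniformly only on a bounded range of $\epsilon$ (e.g.\ $\epsilon\in(0,1]$), so the constant $C'$ should be understood as covering that range; since the lemma is only invoked as $\epsilon\to0$, this costs nothing.
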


\begin{proof}
We begin by proving (\ref{eq:Conv}) by following the first part of the argument of \cite[Lemma 3.7]{pavliotis2005}.  We fix $\alpha \in (0, \frac{1}{2})$ and use the factorization method from \cite{daprato1988} (see also \cite[Section 5.3]{daprato}) to rewrite
\begin{align*}
I(t) &= \int _0 ^t e^{-\frac{(t-s)}{k_j \epsilon}} \; d(W_s)_j \\[3pt]
 &= \frac{\sin(\pi \alpha)}{\pi} \int _0 ^t  e^{-\frac{(t-s)}{k_j\epsilon}} (t - s)^{\alpha - 1} Y(s) ds \; ,
\end{align*}
where 
$$Y(s) = \int_0^s e^{-\frac{(s-u)}{k_j \epsilon}} (s - u)^{- \alpha} d(W_u)_j $$
and we used the identity
$$\int _u ^t (t - s) ^{\alpha - 1} (s - u)^{- \alpha} ds = \frac{\pi}{\sin(\pi \alpha)} \; , \hspace{10pt} 0 < \alpha < 1 \; .$$

We fix $m > \frac{1}{2 \alpha}$ and use the H$\ddot{\mathrm{o}}$lder inequality:
\begin{equation*}
| I(t) | ^{2m} \leq C_1 \left( \int _0 ^t \left| e^{-\frac{(t-s)}{k_j \epsilon}} (t - s)^{\alpha - 1} \right| ^{\frac{2m}{2m - 1}} ds \right) ^{2m - 1} \int_0 ^t | Y(s) | ^{2m} ds \; .
\end{equation*}
Using the change of variables $z = \frac{2m}{2m-1} \frac{(t-s)}{k_j \epsilon} $, we have
\begin{align*}
\int _0 ^t e^{- \frac{2m}{2m - 1} \frac{(t - s)}{k_j \epsilon}} (t - & s)^{\frac{2m}{2m - 1} ( \alpha - 1)} ds  \\
& = \left(\frac{k_j (2m - 1)}{2 m} \right) ^{\frac{2m \alpha - 1}{2m - 1}} \epsilon ^{\frac{2m \alpha - 1}{2m - 1}} \int_0 ^{\frac{t}{k_j \epsilon} \frac{2 m}{2m - 1}} e ^{- z} z ^{\frac{2m}{2m - 1} (\alpha - 1)} dz \\[5pt]
& \leq C_2 \epsilon ^{\frac{2m \alpha - 1}{2m - 1}} \int_0 ^{\infty} e ^{- z} z ^{\frac{2m}{2m - 1} (\alpha - 1)} dz \\[4pt]
& \leq C_3 \epsilon ^{\frac{2m \alpha - 1}{2m - 1}}
\end{align*}
where in the above we have used the fact that $e ^{- z} z ^{\frac{2m}{2m - 1} (\alpha - 1)} \in L^1 (\mathbb{R} ^+)$ since $m > \frac{1}{2 \alpha}$.  Therefore, we have
\begin{equation*}
E \left[ \sup_{0 \leq t \leq T} | I(t) | ^{2m} \right] \leq C_4 \epsilon ^{2m \alpha - 1} E \left[ \int _0^T | Y(s) | ^{2m} ds \right] .
\end{equation*}

Then, letting $\frac{1}{4} < \alpha < \frac{1}{2}$ and $m = 2$, we have
\begin{align*}
E \Big[ \big| Y(t) \big| ^{4} \Big] & = 3 \bigg( E \left[ \big( Y(t) \big) ^2  \right] \bigg) ^2 \hspace{20pt} \textrm{since $Y(t)$ is a zero-mean Gaussian}\\[3pt]
& = 3 \Bigg( \int_0^t  e^{-\frac{2(t-u)}{k_j \epsilon}} (t - u)^{- 2 \alpha}  \; du \Bigg) ^2 \hspace{30pt} \textrm{by the It\^o isometry} \\[3pt]
& = 3 \left( \frac{k_j \epsilon}{2} \right) ^{2(1 - 2 \alpha)} \left( \int_0 ^{\frac{2t}{k_j \epsilon}} e^{-s} s^{-2 \alpha} ds \right) ^2 \\[2pt]
& \leq C_5 \epsilon ^{2(1 - 2 \alpha)} \hspace{11pt} \textrm{using the fact that $e^{-s} s^{-2 \alpha} \in L^1 (\mathbb{R} ^+)$ since $\alpha < \frac{1}{2}$} \; .  
\end{align*}
Thus,
\begin{equation*}
E \left[ \sup_{0 \leq t \leq T} | I(t) | ^{4} \right] \leq C_6 \epsilon \; , 
\end{equation*}
where $C_6$ is a constant that depends on $T$, so we get (\ref{eq:Conv}) by the Cauchy-Schwarz inequality:
\begin{equation*}
E \left[ \sup_{0 \leq t \leq T} | I(t) | ^{2} \right] \leq \Bigg( E \left[ \sup_{0 \leq t \leq T} | I(t) | ^{4} \right] \Bigg) ^{1 / 2} \leq C' \epsilon ^{1 / 2} \; . 
\end{equation*}

Now we prove (\ref{ybound}).  The solution of (\ref{yequation}) is
\begin{equation*}
	(y_t^\epsilon)_j = e^{-\frac{t}{k_j \epsilon}}(y_0^\epsilon)_j +\frac{1}{k_j \epsilon} \int_0^t e^{-\frac{(t-s)}{k_j \epsilon}}\;
	d(W_s)_j \; . 
\end{equation*}
Thus, we have
\begin{align*}
	E\left [\sup_{0\leq t\leq T} \big| (\epsilon y_t^\epsilon)_j \big| ^2 \right ] \leq & \; 2E\left [\sup_{0\leq t\leq T} \bigg| e^{-\frac{t}{k_j \epsilon}}(\epsilon y_0 ^{\epsilon})_j \bigg| ^2 \right ] \hspace{120pt} \\
  & \hspace{50pt} + \frac{2}{k_j ^2} E\left[ \sup_{0\leq t\leq T} \Bigg( \int_0^t e^{-\frac{(t-s)}{k_j \epsilon}}\;d(W_s)_j \Bigg) ^2 \right] \; .
\end{align*}
For the first term we use that $(y_0^\epsilon)_j$ is distributed according to the stationary distribution corresponding to (\ref{yequation}) and thus has mean zero and variance $E \big[ |(y_0^\epsilon)_j|^2 \big] = (2k_j\epsilon)^{-1}$:
\begin{equation*}
E\left [\sup_{0\leq t\leq T} \bigg| e^{-\frac{t}{k_j \epsilon}}(\epsilon y_0 ^{\epsilon})_j \bigg| ^2 \right ] \leq E\left[ \big| \epsilon (y_0^\epsilon)_j \big| ^2 \right] = \frac{\epsilon}{2 k_j} \; .
\end{equation*}
The bound (\ref{ybound}) then follows from this bound together with (\ref{eq:Conv}). 
\end{proof}

The next lemma is elementary but we include its proof for completeness. 

\begin{lemma}
	\label{lemma:ey2toConstant}
For each $\epsilon > 0$, let $\bm{y}^\epsilon$ be defined on $t \geq 0$ as the solution to equation~(\ref{yequation}) with $\bm{y}^\epsilon _0$ distributed according to the stationary distribution corresponding to (\ref{yequation}), and let $\bm{y}^\epsilon$ be defined on $t_- \leq t < 0$ in such a way so that $\bm{y}^\epsilon$ considered on the interval $t \geq t_-$ is a stationary process.  Then there exists $C > 0$ such that for all $\epsilon > 0$,
$$E \left[ \Big| \epsilon( y_t ^{\epsilon})_j (y_s ^{\epsilon})_{\ell} \Big| ^2 \right] < C$$ 
for all $1 \leq j, \ell \leq n$ and $t,s \geq t_- \; .$ 
\end{lemma}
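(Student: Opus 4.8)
The plan is to reduce everything to the second- and fourth-moment formulas for a stationary, mean-zero Gaussian process, exploiting that the $n$ components of $\bm{y}^\epsilon$ are mutually independent. First I would record that, since $(\bm{y}^-)^\epsilon$ is chosen so that $\bm{y}^\epsilon$ is stationary on all of $[t_-,\infty)$, each coordinate $(y_\cdot^\epsilon)_j$ is a stationary, mean-zero Gaussian process with
$$E\big[(y_t^\epsilon)_j (y_s^\epsilon)_j\big] = \frac{1}{2k_j\epsilon}\, e^{-|t-s|/(k_j\epsilon)}, \qquad t,s \ge t_-,$$
so in particular $E[(y_t^\epsilon)_j^2] = (2k_j\epsilon)^{-1}$; this is just the stationary OU covariance with $\alpha = \sigma = (k_j\epsilon)^{-1}$ obtained from the formula quoted in Section~2.

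Next I would split into the cases $j \ne \ell$ and $j = \ell$. For $j \ne \ell$, independence of the coordinates gives
$$E\big[\,|\epsilon (y_t^\epsilon)_j (y_s^\epsilon)_\ell|^2\,\big] = \epsilon^2\, E\big[(y_t^\epsilon)_j^2\big]\, E\big[(y_s^\epsilon)_\ell^2\big] = \epsilon^2 \cdot \frac{1}{2k_j\epsilon}\cdot\frac{1}{2k_\ell\epsilon} = \frac{1}{4 k_j k_\ell},$$
a constant independent of $\epsilon$, $t$, and $s$. For $j = \ell$, the pair $\big((y_t^\epsilon)_j,(y_s^\epsilon)_j\big)$ is jointly Gaussian and mean zero, so Wick's theorem (the Isserlis formula already invoked in Section~2) yields
$$E\big[(y_t^\epsilon)_j^2 (y_s^\epsilon)_j^2\big] = E\big[(y_t^\epsilon)_j^2\big]E\big[(y_s^\epsilon)_j^2\big] + 2\Big(E\big[(y_t^\epsilon)_j (y_s^\epsilon)_j\big]\Big)^2 .$$
Multiplying by $\epsilon^2$ and using the covariance formula above gives
$$E\big[\,|\epsilon (y_t^\epsilon)_j (y_s^\epsilon)_j|^2\,\big] = \frac{1}{4k_j^2} + \frac{1}{2k_j^2}\, e^{-2|t-s|/(k_j\epsilon)} \le \frac{3}{4k_j^2},$$
again uniformly in $\epsilon$, $t$, and $s$.

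Finally I would set $C = 1 + \max_{1\le j,\ell\le n}\big(\tfrac{3}{4k_j^2} + \tfrac{1}{4k_j k_\ell}\big)$ (any constant strictly larger than the two bounds above works), which gives the claimed strict inequality for all $j,\ell$ and all $t,s \ge t_-$. There is essentially no obstacle here: the only point that needs care is to justify using the stationary covariance formula for arbitrary (possibly negative) times $t,s$, which is exactly the content of the hypothesis that $(\bm{y}^-)^\epsilon$ extends $\bm{y}^\epsilon$ to a stationary process on $[t_-,\infty)$; everything else is the elementary second- and fourth-moment algebra for Gaussian vectors.
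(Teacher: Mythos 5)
Your proof is correct and follows essentially the same route as the paper's: both rest on the stationary OU covariance $E[(y_t^\epsilon)_j(y_s^\epsilon)_j]=\tfrac{1}{2k_j\epsilon}e^{-|t-s|/(k_j\epsilon)}$ together with Wick's (Isserlis') formula for the fourth moment, the paper simply writing the two cases $j=\ell$ and $j\neq\ell$ in a single expression with a Kronecker delta where you treat them separately. No gap.
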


\begin{proof}

For all $t \geq t_-$, $(y_t^\epsilon)_j$ is a mean-zero Gaussian random variable with variance
\begin{equation}
	\label{eq:Var}
	 E \left[ \big|(y_t^\epsilon)_j \big| ^2 \right] = \frac{1}{2k_j \epsilon} \; ,
\end{equation}
and so
\begin{equation*}
	 E \left[ \big|(y_t^\epsilon)_j \big| ^4 \right] = \frac{3}{4 k_j ^2 \epsilon ^2} \; .
\end{equation*}
Thus, by the Cauchy-Schwarz inequality
\begin{align*}
E \left[ \Big| \epsilon( y_t ^{\epsilon})_j (y_s ^{\epsilon})_{\ell} \Big| ^2 \right] \; \leq \; \bigg( E \left[ \big( \sqrt{\epsilon}y_t^\epsilon \big) _j^4 \right] \bigg) ^{1/2} \bigg( E \left[ \big( \sqrt{\epsilon}y_s^\epsilon \big) _\ell^4  \right] \bigg) ^{1/2} \;
                       = \; \frac{3}{4 k_j k_\ell} \; \; .
\end{align*}
\end{proof}

\begin{lemma}
	\label{lemma:continuityofxepsilon}
For each $\epsilon > 0$, let $\bm{x}^{\epsilon}$ be defined as in the statement of
Theorem \ref{theorem} where $\bm{f}$ is bounded and $\bm{\sigma}$ is bounded with bounded derivatives.  Also, for $\bm{t} = (t_1, ... , t_m) \in [0,T] ^m$ and $\bm{h} = (h_1 , ... , h_m) \in [0,1] ^m$, let $\bm{x} ^{\epsilon} _{\bm{t}} = \big( (x ^{\epsilon} _{t_1})_1 , ..., (x ^{\epsilon} _{t_m})_m \big)$, let $\bm{x} ^{\epsilon} _{\bm{t} + \bm{h}} = \big( (x ^{\epsilon} _{t_1 + h_1})_1 , ..., (x ^{\epsilon} _{t_m + h_m})_m \big)$, and let $h_M = \max_{1 \leq i \leq m} h_i$.  Then there exists $C$ independent of $\epsilon$ such that for all $\bm{t} \in [0,T]^m$, $\bm{h} \in [0,1] ^m$, and $\epsilon > 0$,
\begin{equation*}
E \left[ \left\| \bm{x}^{\epsilon} _{\bm{t} + \bm{h}} - \bm{x}^{\epsilon} _{\bm{t}} \right\| ^2 \right] \leq C \left( h_M + \epsilon ^{1/2} \right) \; .
\end{equation*}
\end{lemma}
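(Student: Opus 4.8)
The plan is to work not from (\ref{SDDE}) directly but from its integrated-by-parts form (\ref{xequationaftersub}), in which every term other than the genuine drift $\int f$ and the stochastic integral $\int\bm{\sigma}\,d\bm{W}$ carries either a bounded integrand or the factor $\epsilon\bm{y}^\epsilon$, which is small in $L^2$ by Lemma~\ref{lemma:stochasticconvolution}. (In the raw equation (\ref{SDDE}) the term $\bm{\sigma}(\bm{x}^\epsilon_{t-\bm{c}\epsilon})\bm{y}^\epsilon_t$ has a non-constant $\bm{\sigma}$ multiplying a noise of pointwise size $\sim\epsilon^{-1/2}$, which cannot be estimated termwise; this is precisely why the integration by parts was performed.) Since $\|\bm{x}^\epsilon_{\bm{t}+\bm{h}}-\bm{x}^\epsilon_{\bm{t}}\|^2=\sum_{i=1}^m|(x^\epsilon_{t_i+h_i})_i-(x^\epsilon_{t_i})_i|^2$ and $h_i\le h_M$, it suffices to show that for each fixed $i$, $E[|(x^\epsilon_{u+h})_i-(x^\epsilon_u)_i|^2]\le C(h+\epsilon^{1/2})$ uniformly in $u\in[0,T]$, $h\in[0,1]$, and $\epsilon$.

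Subtracting (\ref{xequationaftersub}) at $t=u$ from the same equation at $t=u+h$, the constant term $(x_0)_i$ cancels, as does the $t$-independent boundary term $\sum_j\sigma_{ij}(\bm{x}^\epsilon_{-\bm{c}\epsilon})k_j\epsilon(y^\epsilon_0)_j$, leaving five contributions: (i) $\int_u^{u+h}f_i(\bm{x}^\epsilon_s)\,ds$; (ii) $\big(\int_u^{u+h}\bm{\sigma}(\bm{x}^\epsilon_{s-\bm{c}\epsilon})\,d\bm{W}_s\big)_i$; (iii) the integral over $[u,u+h]$ of the integrand involving $k_j\epsilon(y^\epsilon_s)_j f_p(\bm{x}^\epsilon_{s-c_p\epsilon})$; (iv) the integral over $[u,u+h]$ of the integrand involving $k_j\epsilon(y^\epsilon_s)_j\sigma_{p\ell}(\cdot)(y^\epsilon_{s-c_p\epsilon})_\ell$; and (v) the difference at $t=u+h$ and $t=u$ of $-\sum_j\sigma_{ij}(\bm{x}^\epsilon_{t-\bm{c}\epsilon})k_j\epsilon(y^\epsilon_t)_j$. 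I would bound the five contributions to $E[|\cdot|^2]$ separately via $(\sum_{r=1}^5 a_r)^2\le 5\sum_{r=1}^5 a_r^2$. Term (i) is $\le\|\bm{f}\|_\infty^2 h^2\le\|\bm{f}\|_\infty^2 h$ since $h\le 1$. Term (ii) is $\le n\|\bm{\sigma}\|_\infty^2 h$ by the It\^o isometry and boundedness of $\bm{\sigma}$. For term (iii), boundedness of $\partial\bm{\sigma}$ and of $\bm{f}$ makes the integrand pointwise $\le C\|\epsilon\bm{y}^\epsilon_s\|$, so the term is $\le Ch\sup_s\|\epsilon\bm{y}^\epsilon_s\|$, whose square has expectation $\le Ch^2\epsilon^{1/2}\le C\epsilon^{1/2}$ by Lemma~\ref{lemma:stochasticconvolution}; term (v) is handled identically since each boundary piece is $\le C\sup_s\|\epsilon\bm{y}^\epsilon_s\|$. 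For term (iv), boundedness of $\partial\bm{\sigma}$ and $\bm{\sigma}$ makes the integrand pointwise $\le C\sum_{p,j,\ell}|\epsilon(y^\epsilon_s)_j(y^\epsilon_{s-c_p\epsilon})_\ell|$, which has $L^2$-norm bounded uniformly in $\epsilon$ by Lemma~\ref{lemma:ey2toConstant}; hence, by the Minkowski integral inequality, the term has $L^2$-norm $\le Ch$, giving $E[|\cdot|^2]\le Ch^2\le Ch$. Summing the five bounds yields $E[|(x^\epsilon_{u+h})_i-(x^\epsilon_u)_i|^2]\le C(h+\epsilon^{1/2})$.

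Two minor technical points must be noted. The It\^o integral in (ii) and all the terms of (\ref{xequationaftersub}) are adapted to the filtration generated by $\bm{W}$ together with the independent past data $\bm{x}^-,\bm{y}^-$, with respect to which $\bm{W}$ is still a Brownian motion, so the It\^o isometry applies in the usual way. Also, Lemmas~\ref{lemma:stochasticconvolution} and \ref{lemma:ey2toConstant} are used for time arguments in $[t_-,T]$ (or a slightly larger interval if $t_i+h_i>T$ is allowed); the proof of Lemma~\ref{lemma:stochasticconvolution} works verbatim on any fixed bounded interval and $\bm{y}^\epsilon$ is stationary on $t>t_-$, so the bounds $E[\sup\|\epsilon\bm{y}^\epsilon\|^2]\le C\epsilon^{1/2}$ and $E[|\epsilon(y^\epsilon_t)_j(y^\epsilon_s)_\ell|^2]\le C$ hold there with $C$ independent of $\epsilon$. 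I do not expect any individual estimate to be hard; the only genuine idea is to run the argument from the integrated-by-parts equation (\ref{xequationaftersub}), after which the $h$-part of the bound comes from the drift and diffusion terms while the $\epsilon^{1/2}$-part comes entirely from Lemma~\ref{lemma:stochasticconvolution}.
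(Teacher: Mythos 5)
Your proposal is correct and follows essentially the same route as the paper: both arguments start from the integrated-by-parts equation (\ref{xequationaftersub}), split the increment into the drift, the stochastic integral, the two Lebesgue integrals carrying $\epsilon\bm{y}^\epsilon$, and the boundary terms, and then bound these respectively by boundedness of $\bm{f}$, the It\^o isometry with boundedness of $\bm{\sigma}$, Lemma~\ref{lemma:stochasticconvolution}, and Lemma~\ref{lemma:ey2toConstant}, yielding the $h_M$ and $\epsilon^{1/2}$ contributions exactly as you describe.
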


\begin{proof}
From equation (\ref{xequationaftersub}) and the Cauchy-Schwarz inequality, we have
\begin{align*}
E & \left[ \Big| \left( x^{\epsilon} _{t_i + h_i} \right) _i - \left( x ^{\epsilon} _{t_i} \right) _i \Big| ^2 \right] \hspace{270pt} \\
& \; \hspace{10pt} \leq 6 E \left[ \left( \int_{t_i} ^{t_i + h_i} f_i(\bm{x} ^{\epsilon} _s) ds \right) ^2 \right] + 6 E \left[  \left(\rule{0cm}{.7cm}\right. \left( \int_{t_i} ^{t_i + h_i} \bm{\sigma}(\bm{x} ^{\epsilon} _{s - \bm{c} \epsilon }) d \bm{W} _s \right) _i \left.\rule{0cm}{.7cm}\right) ^2 \right]\\
& \hspace{23pt} + 6  E \left[ \left( \int_{t_i}^{t_i + h_i} \sum _{p, j} \frac{\partial \sigma_{ij}}{\partial x_p} (\bm{x} ^{\epsilon} _{s - \bm{c} \epsilon }) k _j \epsilon (y ^{\epsilon} _s)_j f_p (\bm{x} ^{\epsilon} _{s - c_p \epsilon }) ds \right) ^2 \right] \\
& \hspace{23pt} + 6 E \left[ \left( \int_{t_i}^{t_i + h_i} \sum_{p, j, \ell} \frac{\partial \sigma_{ij}}{\partial x_p} (\bm{x} ^{\epsilon} _{s - \bm{c} \epsilon }) k _j \epsilon (y ^{\epsilon} _s)_j \sigma_{p \ell} (\bm{x} ^{\epsilon} _{s - c_p \epsilon - \bm{c} \epsilon }) (y ^{\epsilon}_{s - c_p \epsilon })_{\ell}  ds \right) ^2 \right]  \\
& \hspace{23pt} + 6 E \left[ \left( \sum _j \sigma_{ij} (\bm{x} ^{\epsilon} _{t_i + h_i - \bm{c} \epsilon }) k_j \epsilon (y ^{\epsilon} _{t_i + h_i})_j \right) ^2 \right] \\
& \hspace{23pt} + 
6 E \left[ \left( \sum _j \sigma_{ij} (\bm{x} ^{\epsilon} _{t_i - \bm{c} \epsilon }) k_j \epsilon (y ^{\epsilon} _{t_i})_j \right) ^2 \right] .
\end{align*}
Using the boundedness of $f_i$, the It\^o isometry, and the Cauchy-Schwarz inequality then gives
\begin{align*}
 E \bigg[ & \Big| \left( x ^{\epsilon} _{t_i + h_i} \right) _i - \left( x ^{\epsilon} _{t_i} \right) _i \Big| ^2 \bigg] \hspace{265pt} \\[5pt]
& \; \leq C_1 {h_i}^2 + 6 C_2 \int_{t_i} ^{t_i + h_i} E \left[ \left\| \bm{\sigma}(\bm{x} ^{\epsilon} _{s - \bm{c} \epsilon }) \right\| ^2 \right] ds \\[3pt]
& \hspace{10pt} + 6 h_i   E \left[ \int_{t_i}^{t_i + h_i} \left( \sum _{p, j} \frac{\partial \sigma_{ij}}{\partial x_p} (\bm{x} ^{\epsilon} _{s - \bm{c} \epsilon }) k _j \epsilon (y ^{\epsilon} _s)_j f_p (\bm{x} ^{\epsilon} _{s - c_p \epsilon } ) \right) ^2 ds  \right] \\[3pt]
& \hspace{10pt} + 6 h_i E \left[  \int_{t_i}^{t_i + h_i} \left( \sum_{p, j, \ell} \frac{\partial \sigma_{ij}}{\partial x_p} (\bm{x} ^{\epsilon} _{s - \bm{c} \epsilon }) k _j \epsilon (y ^{\epsilon} _s)_j \sigma_{p \ell} (\bm{x} ^{\epsilon} _{s - c_p \epsilon - \bm{c} \epsilon }) (y ^{\epsilon}_{s - c_p \epsilon })_{\ell} \right) ^2 ds \right]  \\[3pt]
& \hspace{10pt} + 6n E \left[ \sum _j \Big( \sigma_{ij} (\bm{x} ^{\epsilon} _{t_i + h_i - \bm{c} \epsilon }) k_j \epsilon (y ^{\epsilon} _{t_i + h_i})_j \Big) ^2 \right] + 
6n E \left[ \sum _j \Big( \sigma_{ij} (\bm{x} ^{\epsilon} _{t_i - \bm{c} \epsilon }) k_j \epsilon (y ^{\epsilon} _{t_i})_j \Big) ^2 \right] .
\end{align*}
Using the boundedness of $\bm{f}$ and the boundedness of $\bm{\sigma}$ and its derivatives, we get
\begin{align*}
 E & \left[ \Big| \left( x ^{\epsilon} _{t_i + h_i} \right) _i - \left( x ^{\epsilon} _{t_i} \right) _i \Big| ^2 \right] \\
  & \hspace{64pt} \leq C_1 {h_i}^2 + C_3 h_i + 6 C_4 h_i  \int_{t_i}^{t_i + h_i} E \left[ \sum _{j} \Big( \epsilon (y ^{\epsilon} _s)_j \Big) ^2 \right] ds \\[3pt]
& \hspace{75pt} + 6 C_5 h_i  \int_{t_i}^{t_i + h_i} E \left[ \sum_{ j, \ell} \left( \epsilon (y ^{\epsilon} _s)_j  (y ^{\epsilon}_{s - c_p \epsilon })_{\ell} \right) ^2 \right] ds  \\[3pt]
& \hspace{75pt} + 6n C_6 E \left[ \sum _j \Big( \epsilon (y ^{\epsilon} _{t_i + h_i})_j \Big) ^2 \right] + 6n C_6 E \left[ \sum _j \Big( \epsilon (y ^{\epsilon} _{t_i})_j \Big) ^2 \right] .
\end{align*}
Thus, using Lemma \ref{lemma:stochasticconvolution} and Lemma \ref{lemma:ey2toConstant},
\begin{align*}
& E \left[ \Big| \left( x ^{\epsilon} _{t_i + h_i} \right) _i - \left( x ^{\epsilon} _{t_i} \right) _i \Big| ^2 \right] \leq C_1 {h_i}^2 + C_3 h_i +6 C_7 \epsilon ^{1 / 2} {h_i} ^2 + 6 C_8 {h_i} ^2 + C_9 \epsilon ^{1 / 2}
\end{align*}
from which the statement follows.
\end{proof}

\subsubsection{$ \textbf{U} ^{\epsilon}$ converges to zero}\label{sec:Utozero}

Now we are ready to prove that $\bm{U}^\epsilon$ goes to zero in probability with respect to $C([0,T] , \mathbb{R}^m)$ as $\epsilon \rightarrow 0$.  To do this, we split $(U^\epsilon)_i$ into three types of terms. Recall that $(U ^\epsilon)_i$ is defined in equation~(\ref{eq:U}) as 
\begin{align*}
(U^{\epsilon} _t)_i = \hspace{2pt} & \underbrace{ - \int_0^{c^* \epsilon} \sum_{p, j, \ell} \frac{\partial \sigma_{ij}}{\partial x_p} (\bm{x} ^{\epsilon} _s) \sigma_{p \ell} (\bm{x} ^{\epsilon} _s ) k _j \epsilon (y ^{\epsilon} _s)_j (y ^{\epsilon}_{s - c_p \epsilon })_{\ell}  ds }_{I} \\[5pt]
& \underbrace{ - \left( \int_0 ^{c^* \epsilon} \bm{\sigma}(\bm{x} ^{\epsilon} _{s - \bm{c} \epsilon}) \epsilon \bm{D} ^{-1} d \bm{y} ^{\epsilon} _s \right) _i}_{I} \\[5pt]
 & + \underbrace{\left( \int_0 ^t \Big( \bm{\sigma}(\bm{x} ^{\epsilon} _{s - \bm{c} \epsilon }) - \bm{\sigma}(\bm{x}_{s}^\epsilon) \Big)  d \bm{W} _s \right) _i}_{II} \\[5pt]
& + \int_{c^* \epsilon}^t \sum_{p, j, \ell} \Bigg[ \frac{\partial \sigma_{ij}}{\partial x_p} (\bm{x} ^{\epsilon} _{s - \bm{c} \epsilon }) \sigma_{p \ell} (\bm{x} ^{\epsilon} _{s - c_p \epsilon - \bm{c} \epsilon }) \\
& \hspace{10pt} \underbrace{ \hspace{120pt} - \frac{\partial \sigma_{ij}}{\partial x_p} (\bm{x} ^{\epsilon} _{s}) \sigma_{p \ell} (\bm{x} ^{\epsilon} _{s}) \Bigg] k _j \epsilon (y ^{\epsilon} _s)_j (y ^{\epsilon}_{s - c_p \epsilon })_{\ell}  ds}_{II} \\[5pt]
& \underbrace{- \sum _j \sigma_{ij} (\bm{x} ^{\epsilon} _{t - \bm{c} \epsilon}) k_j \epsilon (y ^{\epsilon} _t)_j + \sum _j \sigma_{ij} (\bm{x} ^{\epsilon} _{ c^* \epsilon - \bm{c} \epsilon}) k_j \epsilon (y_{c^* \epsilon} ^{\epsilon})_j}_{III} \\[5pt]
& + \underbrace{\int_{c^* \epsilon}^t \sum _{p, j} \frac{\partial \sigma_{ij}}{\partial x_p} (\bm{x} ^{\epsilon} _{s - \bm{c} \epsilon }) k _j \epsilon (y ^{\epsilon} _s)_j f_p (\bm{x} ^{\epsilon} _{s - c_p \epsilon }) ds}_{III} \; .
\end{align*}
 We will prove that these terms all converge to zero as $\epsilon\rightarrow 0$.  The type $I$ terms converge to zero because the length of the interval of integration goes to zero. 
Convergence of the type $II$ terms to zero will be a consequence of Lemma \ref{lemma:continuityofxepsilon} and the Lipschitz continuity of $\bm{\sigma}$ and its first derivatives.  The proof of convergence of type $III$ terms to zero will follow from 
 $\epsilon \bm{y}^\epsilon \rightarrow \bm{0}$ as $\epsilon\rightarrow 0$.  Recall that for $p \geq 1$, convergence in $L^p$ with respect to $C([0,T] , \mathbb{R})$ implies convergence in probability with respect to $C([0,T] , \mathbb{R})$.  Thus, to show that these terms converge to zero in probability with respect to $C([0,T] , \mathbb{R})$, we show that the fourth term goes to zero in $L^1$ with respect to $C([0,T] , \mathbb{R})$ and the other terms go to zero in $L^2$ with respect to $C([0,T] , \mathbb{R})$. We note that it is possible to show that the fourth term goes to zero in $L^2$ with respect to $C([0,T] , \mathbb{R})$, but it would take slightly more work. 

We start with the first two terms of $(U^{\epsilon} _t)_i$ , i.e., the type $I$ terms.  For the first term, by two separate applications of 
the Cauchy-Schwarz inequality, the boundedness of $\bm{\sigma}$ and its first derivatives, and Lemma \ref{lemma:ey2toConstant}, we have
\begin{align*}
& E \left[\rule{0cm}{.8cm}\right. \left(\rule{0cm}{.7cm}\right. \sup_{0 \leq t \leq T} \Bigg|
\int_0^{c^* \epsilon} \sum_{p, j, \ell} \frac{\partial \sigma_{ij}}{\partial x_p} (\bm{x} ^{\epsilon} _s) \sigma_{p \ell} (\bm{x} ^{\epsilon} _s ) k _j \epsilon (y ^{\epsilon} _s)_j (y ^{\epsilon}_{s - c_p \epsilon })_{\ell}  ds
\Bigg| \left.\rule{0cm}{.7cm}\right) ^2 \left.\rule{0cm}{.8cm}\right] \hspace{50pt} \\[3pt]
  & \hspace{60pt} \leq C_1 \epsilon \sum_{p, j, \ell}
\int_0^{c^* \epsilon}  E \left[\rule{0cm}{.7cm}\right. \Bigg| \frac{\partial \sigma_{ij}}{\partial x_p} (\bm{x} ^{\epsilon} _s) \sigma_{p \ell} (\bm{x} ^{\epsilon} _s ) k _j \epsilon (y ^{\epsilon} _s)_j (y ^{\epsilon}_{s - c_p \epsilon })_{\ell} \Bigg| ^2
\left.\rule{0cm}{.7cm}\right] ds \\[3pt]
& \hspace{60pt} \leq C_2 \epsilon ^2 \; .
\end{align*}
For the second term, using (\ref{ymatrixform}) and two separate applications of 
the Cauchy-Schwarz inequality, we have
\begin{align*}
& E \left[\rule{0cm}{.8cm}\right. \left(\rule{0cm}{.7cm}\right. \sup_{0 \leq t \leq T} \Bigg|
  \left( \int_0 ^{c^* \epsilon} \bm{\sigma}(\bm{x} ^{\epsilon} _{s - \bm{c} \epsilon}) \epsilon \bm{D} ^{-1} d \bm{y} ^{\epsilon} _s \right) _i \Bigg| \left.\rule{0cm}{.7cm}\right) ^2 \left.\rule{0cm}{.8cm}\right] \hspace{140pt} \\[3pt]
& \hspace{40pt} = E \left[\rule{0cm}{.7cm}\right. \Bigg|
  \sum _j \int_0 ^{c^* \epsilon} \sigma _{ij} (\bm{x} ^{\epsilon} _{s - \bm{c} \epsilon}) 
                       \Big( - ( y ^{\epsilon} _s )_j ds + d ( W_s )_j \Big) \Bigg| ^2 \left.\rule{0cm}{.7cm}\right] \\[3pt]
& \hspace{40pt} \leq C_1 \sum _j E \left[\rule{0cm}{.7cm}\right. \Bigg|
   \int_0 ^{c^* \epsilon} \sigma _{ij} (\bm{x} ^{\epsilon} _{s - \bm{c} \epsilon}) 
                        ( y ^{\epsilon} _s )_j ds \Bigg| ^2 \left.\rule{0cm}{.7cm}\right] \\
& \hspace{150pt} + C_1 \sum _j E \left[\rule{0cm}{.7cm}\right. \Bigg| \int_0 ^{c^* \epsilon} \sigma _{ij} (\bm{x} ^{\epsilon} _{s - \bm{c} \epsilon})  d ( W_s )_j \Bigg| ^2 \left.\rule{0cm}{.7cm}\right] .
\end{align*}
Using the Cauchy-Schwarz inequality, the It\^o isometry, and the boundedness of $\bm{\sigma}$ then gives
\begin{align*}
& E \left[\rule{0cm}{.8cm}\right. \left(\rule{0cm}{.7cm}\right. \sup_{0 \leq t \leq T} \Bigg|
  \left( \int_0 ^{c^* \epsilon} \bm{\sigma}(\bm{x} ^{\epsilon} _{s - \bm{c} \epsilon}) \epsilon \bm{D} ^{-1} d \bm{y} ^{\epsilon} _s \right) _i \Bigg| \left.\rule{0cm}{.7cm}\right) ^2 \left.\rule{0cm}{.8cm}\right] \hspace{140pt} \\[3pt]
& \hspace{20pt} \leq C_2 \epsilon \sum _j  
   \int_0 ^{c^* \epsilon} E \left[ \Big| \sigma _{ij} (\bm{x} ^{\epsilon} _{s - \bm{c} \epsilon}) 
                        ( y ^{\epsilon} _s )_j \Big| ^2 \right] ds  
  + C_1 \sum _j E \left[ \int_0 ^{c^* \epsilon} \sigma _{ij} ^2 (\bm{x} ^{\epsilon} _{s - \bm{c} \epsilon})  ds \right] \\[3pt]
& \hspace{20pt} \leq C_3 \epsilon \; .
\end{align*}

We now turn to the type $II$ terms, for which it suffices to show that
 for every $i, j, \ell,$ and $p$,
\begin{align}
&\lim_{\epsilon \rightarrow 0} E \left[\rule{0cm}{.8cm}\right. \left(\rule{0cm}{.7cm}\right. \sup_{0 \leq t \leq T} \Bigg| \int_0 ^t \Big( \sigma_{ij} (\bm{x} ^{\epsilon} _{s - \bm{c} \epsilon }) - \sigma_{ij} (\bm{x}_{s}^\epsilon) \Big)  d(W _s)_j \Bigg| \left.\rule{0cm}{.7cm}\right) ^2 \left.\rule{0cm}{.8cm}\right] = 0 \hspace{40pt} \label{typeIa} 
\end{align}
and
\begin{align}
& \lim_{\epsilon \rightarrow 0} E \left[\rule{0cm}{.8cm}\right. \sup_{c^* \epsilon \leq t \leq T} \left|\rule{0cm}{.7cm}\right. \int_{c^* \epsilon}^t \Bigg[ \frac{\partial  \sigma_{ij}}{\partial x_p} (\bm{x} ^{\epsilon} _{s - \bm{c} \epsilon }) \sigma_{p \ell} (\bm{x} ^{\epsilon} _{s - c_p \epsilon - \bm{c} \epsilon}) \notag \hspace{140pt} \\
&\hspace{100pt} - \frac{\partial \sigma_{ij}}{\partial x_p} (\bm{x} ^{\epsilon} _{s}) \sigma_{p \ell} (\bm{x} ^{\epsilon} _{s}) \Bigg] k _j \epsilon (y ^{\epsilon} _s)_j (y ^{\epsilon}_{s - c_p \epsilon })_{\ell}  ds \left.\rule{0cm}{.7cm}\right| \left.\rule{0cm}{.8cm}\right] = 0 \; . \label{typeIb}
\end{align}
We note that in (\ref{typeIb}) we take the supremum over $c^* \epsilon \leq t \leq T$ because in the preceding we have implicitly defined this term, as well as the type $III$ terms, to be zero for $0 \leq t < c^* \epsilon$.  First we show (\ref{typeIa}).  We will use the Lipschitz continuity of  $\sigma _{ij}$ for all $i, j$, which follows from the assumption of Theorem \ref{theorem} that $\bm{\sigma}$ has bounded first derivatives.  Also, we note that the It\^o integral in (\ref{typeIa}) is a martingale because the integrand is still nonanticipating in the presence of the delays.  Thus, by Doob's maximal inequality and the It\^o isometry, we have
\begin{align*}
E \left[\rule{0cm}{.8cm}\right. \left(\rule{0cm}{.7cm}\right. \sup_{0 \leq t \leq T} \Bigg| \int_0 ^t \Big( \sigma_{ij} (\bm{x} ^{\epsilon} _{s - \bm{c} \epsilon}) & - \sigma_{ij} (\bm{x}_{s}^\epsilon)  \Big)  d(W _s)_j \Bigg| \left.\rule{0cm}{.7cm}\right) ^2 \left.\rule{0cm}{.8cm}\right] \hspace{110pt} \\[3pt]
& \leq 2 E \left[\rule{0cm}{.7cm}\right. \Bigg( \int_0 ^T \Big( \sigma_{ij} (\bm{x} ^{\epsilon} _{s - \bm{c} \epsilon }) - \sigma_{ij} (\bm{x}_{s}^\epsilon) \Big)  d(W _s)_j \Bigg) ^2 \left.\rule{0cm}{.7cm}\right] \\[5pt]
& = 2 \int_0 ^T E \bigg[ \Big( \sigma_{ij} (\bm{x} ^{\epsilon} _{s - \bm{c} \epsilon }) - \sigma_{ij} (\bm{x}_{s}^\epsilon) \Big) ^2 \bigg] ds \; .
\end{align*}
We bound the above integral by using the Lipschitz continuity of $\sigma _{ij}$ and Lemma \ref{lemma:continuityofxepsilon}.  Since Lemma \ref{lemma:continuityofxepsilon} applies only to differences of values of the process $\bm{x} ^{\epsilon}$ at nonnegative times, we split the above integral into two terms.  The first term involves values of $\bm{x} ^{\epsilon}$ at negative times, i.e., values of the past condition $\bm{x} ^-$, and the second term only involves values of $\bm{x} ^{\epsilon}$ at nonnegative times.  Recalling that $c^* = \max_{1 \leq i \leq m} c_i$ and letting $M$ be the Lipschitz constant for $\sigma_{ij}$, we have
\begin{align*}
& E \left[\rule{0cm}{.8cm}\right. \left(\rule{0cm}{.7cm}\right.  \sup_{0 \leq t \leq T} \Bigg| \int_0 ^t \Big( \sigma_{ij} (\bm{x} ^{\epsilon} _{s - \bm{c} \epsilon}) - \sigma_{ij} (\bm{x}_{s}^\epsilon) \Big)  d(W _s)_j \Bigg| \left.\rule{0cm}{.7cm}\right) ^2 \left.\rule{0cm}{.8cm}\right] \hspace{150pt} \\[5pt]
& \hspace{10pt} \leq 2 \int_0 ^{c^* \epsilon} E \bigg[ \Big( \sigma_{ij} (\bm{x} ^{\epsilon} _{s - \bm{c} \epsilon }) - \sigma_{ij} (\bm{x}_{s}^\epsilon) \Big) ^2 \bigg] ds + 2 \int_{c^* \epsilon} ^T E \bigg[ \Big( M \left\| \bm{x} ^{\epsilon} _{s - \bm{c} \epsilon } - \bm{x}_{s}^\epsilon \right\| \Big) ^2 \bigg] ds \\[5pt]
& \hspace{10pt} \leq C_1 \epsilon + C_2 T \left( \epsilon + \epsilon ^{1 / 2} \right) 
\end{align*}
 using the boundedness of $\sigma _{ij}$ and Lemma \ref{lemma:continuityofxepsilon}, from which (\ref{typeIa}) follows.  To prove (\ref{typeIb}):
\begin{align*}
& E \left[\rule{0cm}{.8cm}\right. \sup_{c^* \epsilon \leq t \leq T} \left|\rule{0cm}{.7cm}\right. \int_{c^* \epsilon}^t \Bigg[ \frac{\partial \sigma_{ij}}{\partial x_p} (\bm{x} ^{\epsilon} _{s - \bm{c} \epsilon }) \sigma_{p \ell} (\bm{x} ^{\epsilon} _{s - c_p \epsilon - \bm{c} \epsilon }) \\
& \hspace{140pt} - \frac{\partial \sigma_{ij}}{\partial x_p} (\bm{x} ^{\epsilon} _{s}) \sigma_{p \ell} (\bm{x} ^{\epsilon} _{s}) \Bigg] k _j \epsilon (y ^{\epsilon} _s)_j (y ^{\epsilon}_{s - c_p \epsilon })_{\ell}  ds \left.\rule{0cm}{.7cm}\right| \left.\rule{0cm}{.8cm}\right] \\[5pt]
& \hspace{10pt} \leq  \int_{c^* \epsilon}^T E \left[\rule{0cm}{.8cm}\right. \left|\rule{0cm}{.7cm}\right. \Bigg[ \frac{\partial \sigma_{ij}}{\partial x_p} (\bm{x} ^{\epsilon} _{s - \bm{c} \epsilon }) \sigma_{p \ell} (\bm{x} ^{\epsilon} _{s - c_p \epsilon - \bm{c} \epsilon }) \\
& \hspace{150pt} - \frac{\partial \sigma_{ij}}{\partial x_p} (\bm{x} ^{\epsilon} _{s}) \sigma_{p \ell} (\bm{x} ^{\epsilon} _{s}) \Bigg] k _j \epsilon (y ^{\epsilon} _s)_j (y ^{\epsilon}_{s - c_p \epsilon })_{\ell}  \left.\rule{0cm}{.7cm}\right| \left.\rule{0cm}{.8cm}\right] ds  \\[5pt]
& \hspace{10pt} \leq \int_{c^* \epsilon}^T \left(\rule{0cm}{.7cm}\right. E \left[ \left( \frac{\partial \sigma_{ij}}{\partial x_p}  (\bm{x} ^{\epsilon} _{s - \bm{c} \epsilon }) \sigma_{p \ell} (\bm{x} ^{\epsilon} _{s - c_p \epsilon - \bm{c} \epsilon }) - \frac{\partial \sigma_{ij}}{\partial x_p} (\bm{x} ^{\epsilon} _{s}) \sigma_{p \ell} (\bm{x} ^{\epsilon} _{s}) \right)^2 \right] \left.\rule{0cm}{.7cm}\right) ^{1 / 2} \\[2pt]
& \hspace{180pt} \times \Bigg( E \left[ \left( k _j \epsilon (y ^{\epsilon} _s)_j (y ^{\epsilon}_{s - c_p \epsilon })_{\ell} \right)^2 \right] \Bigg) ^{1 / 2} ds \\[5pt]
& \hspace{10pt} \leq C_1 \int_{c^* \epsilon}^T \left(\rule{0cm}{.7cm}\right. E \left[ \left( \frac{\partial \sigma_{ij}}{\partial x_p} (\bm{x} ^{\epsilon} _{s - \bm{c} \epsilon }) \sigma_{p \ell} (\bm{x} ^{\epsilon} _{s - c_p \epsilon - \bm{c} \epsilon }) - \frac{\partial \sigma_{ij}}{\partial x_p} (\bm{x} ^{\epsilon} _{s}) \sigma_{p \ell} (\bm{x} ^{\epsilon} _{s}) \right)^2 \right] \left.\rule{0cm}{.7cm}\right) ^{1 / 2} ds
\end{align*}
by Lemma \ref{lemma:ey2toConstant}.  Again, in order to apply Lemma \ref{lemma:continuityofxepsilon}, we split the above integral into a part that involves values of the past condition $\bm{x} ^-$ and a part that only involves values of $\bm{x} ^{\epsilon}$ at nonnegative times.  We have
\begin{align*}
& E \left[\rule{0cm}{.8cm}\right. \sup_{c^* \epsilon \leq t \leq T} \left|\rule{0cm}{.7cm}\right. \int_{c^* \epsilon}^t \Bigg[ \frac{\partial \sigma_{ij}}{\partial x_p} (\bm{x} ^{\epsilon} _{s - \bm{c} \epsilon }) \sigma_{p \ell} (\bm{x} ^{\epsilon} _{s - c_p \epsilon - \bm{c} \epsilon }) \\
& \hspace{120pt} - \frac{\partial \sigma_{ij}}{\partial x_p} (\bm{x} ^{\epsilon} _{s}) \sigma_{p \ell} (\bm{x} ^{\epsilon} _{s}) \Bigg] k _j \epsilon (y ^{\epsilon} _s)_j (y ^{\epsilon}_{s - c_p \epsilon })_{\ell}  ds \left.\rule{0cm}{.7cm}\right| \left.\rule{0cm}{.8cm}\right] \hspace{7pt} \\[5pt]
& \hspace{2pt} \leq C_1 \int_{c^* \epsilon}^{2c^* \epsilon} \left(\rule{0cm}{.8cm}\right. E \left[\rule{0cm}{.7cm}\right. \Bigg( \frac{\partial \sigma_{ij}}{\partial x_p} (\bm{x} ^{\epsilon} _{s - \bm{c} \epsilon }) \sigma_{p \ell} (\bm{x} ^{\epsilon} _{s - c_p \epsilon - \bm{c} \epsilon }) \\
& \hspace{190pt} - \frac{\partial \sigma_{ij}}{\partial x_p} (\bm{x} ^{\epsilon} _{s}) \sigma_{p \ell} (\bm{x} ^{\epsilon} _{s}) \Bigg) ^2 \left.\rule{0cm}{.7cm}\right] \left.\rule{0cm}{.8cm}\right) ^{1 / 2} ds \\[5pt]
& \hspace{10pt} + C_1 \int_{2c^* \epsilon}^T \left(\rule{0cm}{.8cm}\right. E \left[\rule{0cm}{.7cm}\right. 2 \Bigg( \frac{\partial \sigma_{ij}}{\partial x_p} (\bm{x} ^{\epsilon} _{s - \bm{c} \epsilon }) \sigma_{p \ell} (\bm{x} ^{\epsilon} _{s - c_p \epsilon - \bm{c} \epsilon }) - \frac{\partial \sigma_{ij}}{\partial x_p} (\bm{x} ^{\epsilon} _{s - \bm{c} \epsilon}) \sigma_{p \ell} (\bm{x} ^{\epsilon} _{s}) \Bigg)^2 \\
& \hspace{90pt} + 2 \Bigg( \frac{\partial \sigma_{ij}}{\partial x_p} (\bm{x} ^{\epsilon} _{s - \bm{c} \epsilon }) \sigma_{p \ell} (\bm{x} ^{\epsilon} _{s}) - \frac{\partial \sigma_{ij}}{\partial x_p} (\bm{x} ^{\epsilon} _{s}) \sigma_{p \ell} (\bm{x} ^{\epsilon} _{s}) \Bigg) ^2 \left.\rule{0cm}{.7cm}\right] \left.\rule{0cm}{.8cm}\right) ^{1 / 2} ds \\[5pt]
& \hspace{2pt} \leq C_2 \epsilon + C_3 \int_{2c^* \epsilon}^T \left(\rule{0cm}{.7cm}\right. E \Bigg[ \big\| \bm{x} ^{\epsilon} _{s - c_p \epsilon - \bm{c} \epsilon } - \bm{x} ^{\epsilon} _{s} \big\| ^2 +  \left\| \bm{x} ^{\epsilon} _{s - \bm{c} \epsilon } - \bm{x} ^{\epsilon} _{s} \right\| ^2 \Bigg] \left.\rule{0cm}{.7cm}\right) ^{1 / 2} ds \\[2pt]
& \hspace{20pt} \mathrm{(using \; the \; boundedness \; and \; Lipschitz \; continuity \; of \;} \bm{\sigma} \mathrm{ \; and \; its \; first \; derivatives)} \\[2pt]
& \hspace{2pt} \leq C_2 \epsilon + C_4 T \left( \epsilon + \epsilon ^{1 / 2} \right) ^{1 / 2}
\end{align*}
by Lemma \ref{lemma:continuityofxepsilon}, which gives (\ref{typeIb}).  

For the type $III$ terms, it suffices to show that for every $i, j,$ and $p$,
\begin{equation}
\label{TypeIIA}
\lim_{\epsilon \rightarrow 0} E\left [ \sup_{0\leq t\leq T} \Big| \sigma_{ij} (\bm{x} ^{\epsilon} _{t - \bm{c} \epsilon }) k_j \epsilon (y ^{\epsilon} _t)_j \Big| ^2 \right] = 0 \hspace{5pt}
\end{equation}
\begin{equation}
\label{TypeIIB}
\lim_{\epsilon \rightarrow 0} E\left [ \sup_{0\leq t\leq T} 
          \Big| \sigma_{ij} (\bm{x} ^{\epsilon} _{c^* \epsilon - \bm{c} \epsilon }) k_j \epsilon (y_{c^* \epsilon} ^{\epsilon})_j \Big| ^2 \right] = 0 \hspace{5pt}
\end{equation}
and
\begin{equation}
\label{TypeIIC}
 \lim_{\epsilon \rightarrow 0} E\left [ \sup_{c^* \epsilon \leq t\leq T} \Bigg| \int_{c^* \epsilon} ^t \frac{\partial \sigma_{ij}}{\partial x_p} (\bm{x} ^{\epsilon} _{s - \bm{c} \epsilon }) k _j \epsilon (y ^{\epsilon} _s)_j f_p (\bm{x} ^{\epsilon} _{s - c_p \epsilon }) ds \Bigg|^2 \right] = 0 \; . \hspace{5pt}
\end{equation}
Equations (\ref{TypeIIA}) and (\ref{TypeIIB}) follow immediately from the boundedness of $\sigma_{ij}$ and Lemma \ref{lemma:stochasticconvolution}.  To prove equation (\ref{TypeIIC}), we first use the boundedness of $\frac{\partial \sigma_{ij}}{\partial x_p} $ and $f_p$ and then apply the Cauchy-Schwarz inequality:
\begin{align*}
& E \left[ \sup_{c^* \epsilon \leq t\leq T} \Bigg| \int_{c^* \epsilon} ^t \frac{\partial \sigma_{ij}}{\partial x_p} (\bm{x} ^{\epsilon} _{s - \bm{c} \epsilon }) k _j \epsilon (y ^{\epsilon} _s)_j f_p (\bm{x} ^{\epsilon} _{s - c_p \epsilon }) ds \Bigg|^2 \right] \\[2pt]
& \hspace{200pt} \leq C E\left [ \left( \int_{c^* \epsilon}^T \big| \epsilon (y ^{\epsilon} _s)_j \big| ds \right) ^2 \right] 
\\[2pt]
& \hspace{200pt} \leq C T \int_0^T E \left[ \big| \epsilon (y ^{\epsilon} _s)_j \big| ^2 \right] ds  
\end{align*}
which goes to zero by Lemma \ref{lemma:stochasticconvolution}.

Therefore $\bm{U} ^{\epsilon} \rightarrow \bm{0}$ as $\epsilon\rightarrow0$ in probability with respect 
to $C ([0,T], \mathbb{R} ^m)$, as claimed.

\subsubsection{$ \textbf{H} ^{\epsilon} \rightarrow \textbf{H}$}

Here we show that $\bm{H}^{\epsilon}\rightarrow \bm{H}$ in $L^2$, and thus in probability, with respect to $C([0,T],
 \mathbb{R} ^{(1 + n + 1 + m n^2)})$. Note that the first three components of $\bm{H}^\epsilon_t$, defined in (\ref{eq:DefH}), are independent of $\epsilon$. Thus, it suffices to show that $(G^{j\ell})_p \rightarrow0$ in $L^2$ with respect to $C([0,T] , \mathbb{R})$ for all $j,\ell = 1, ..., n$ and $p=1,...,m$.  
 
A heuristic argument that provides some insight into why $(G^{j\ell})_p$ converges to zero is as follows.  Recall that $(G^{j\ell} _t)_p$ is defined as
\begin{equation}
\big( G ^{j \ell} _t \big)_p = \int _0 ^t \bigg( k _j \epsilon ({y} ^{\epsilon} _s)_j ({y} ^{\epsilon}_{s - c_p \epsilon })_{\ell}  -E \Big[ k_j\epsilon(y_s^\epsilon)_j ({y} ^{\epsilon}_{s - c_p \epsilon })_{\ell} \Big] \bigg) ds \; . 
\end{equation}
For each $j$, define the new process $(z)_j$ by $(z_r)_j = \sqrt{\epsilon} (y^{\epsilon}_{\epsilon r})_j$.  Then $(z)_j$ solves the
$\epsilon$-independent equation
$$d(z_r)_j = - \frac{1}{k_j} (z_r)_j dr + \frac{1}{k_j} d \big( \tilde{W} _ r \big) _j $$
with the Wiener process $\tilde{W}_r = \epsilon^{- 1/2}W_{\epsilon r}$.
In terms of the process $(z)_j$, $(G^{j\ell} _t )_p$ can be written as
$$ \big( G^{j\ell} _t \big) _p = \epsilon \int _0 ^{t / \epsilon} \bigg( k _j ({z} _u)_j ({z} _{u - c_p })_{\ell}  - E \Big[ k_j (z_u)_j ({z} _{u - c_p })_{\ell} \Big] \bigg)du \; .$$
The above integral can be thought of as the sum of  $ m = O\left(1/ \epsilon\right)$ identically distributed random variables with zero mean.  Furthermore, these random variables are weakly correlated, since the covariance function for $(z)_j$ decays exponentially with an exponential decay constant of order 1.  Thus, we expect the $L^2$-norm of this sum to grow about as fast as $O\left(1/ \sqrt{\epsilon} \right)$.  Since  $(G^{j\ell} _t)_p$ is equal to this integral multiplied by $\epsilon$, we expect $(G^{j\ell} _t )_p$ to converge to zero as $\epsilon \rightarrow 0$ for all $t \in [0,T]$.  For fixed $t \in [0,T]$, convergence of $(G^{j\ell} _t )_p$ to zero in $L^2$ can be shown by expressing the square of the integral as a double integral and then using Wick's theorem to compute the expectation of the terms in the integrand. However, to control the supremum norm, more work is required, and this is the content of the next lemma.

 \begin{lemma}
\label{lemma:uniformaverage}
For each $\epsilon > 0$, let $\bm{y}^\epsilon$ be defined on $t \geq 0$ as the solution to equation~(\ref{yequation}) with $\bm{y}^\epsilon _0$ distributed according to the stationary distribution corresponding to (\ref{yequation}), and let $\bm{y}^\epsilon$ be defined on $t_- \leq t < 0$ in such a way so that $\bm{y}^\epsilon$ considered on the interval $t \geq t_-$ is a stationary process.  Then for all $j, \ell ,$ and $p$,
\begin{equation}
\label{uniformaveragestatement1}
\lim _{\epsilon \rightarrow 0} E \left[ \Bigg( \sup_{0 \leq t \leq T} \Bigg| \int_0^t  \left( k_j \epsilon (y^{\epsilon}_u)_j (y^{\epsilon}_{u - c_p \epsilon})_j - \frac{1}{2} e^{- \frac{c_p}{k_j}} \right) du \Bigg| \Bigg) ^2 \right] = 0
\end{equation}
and
\begin{equation}
\label{uniformaveragestatement2}
\lim _{\epsilon \rightarrow 0} E \left[ \Bigg( \sup_{0 \leq t \leq T} \Bigg| \int_0^t   k_j \epsilon (y^{\epsilon}_u)_j (y^{\epsilon}_{u - c_p \epsilon})_{\ell} \;du \Bigg| \Bigg) ^2 \right] = 0 \quad \mbox{for }j\neq \ell \; .
\end{equation}
\end{lemma}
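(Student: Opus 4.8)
The plan is to eliminate the $\epsilon$-dependence of the noise by the time change described in the heuristic preceding the lemma, and then to recognize the resulting time integral as a partial sum of an $\epsilon$-independent stationary, mean-zero sequence, to which the Peligrad--Utev maximal inequality applies. Set $(z_r)_j = \sqrt{\epsilon}\,(y^{\epsilon}_{\epsilon r})_j$; then $(z)_j$ solves the $\epsilon$-independent Ornstein--Uhlenbeck equation $d(z_r)_j = -k_j^{-1}(z_r)_j\,dr + k_j^{-1}\,d(\tilde{W}_r)_j$ with $\tilde{W}_r = \epsilon^{-1/2}W_{\epsilon r}$, and since $(\bm{y}^{-})^{\epsilon}$ was chosen to make $\bm{y}^{\epsilon}$ stationary on $[t_-,\infty)$, for $\epsilon\le|t_-|/c_p$ the process $\bm{z}$ agrees in law on $[-c_p,\infty)$ with a two-sided stationary OU process whose law does not depend on $\epsilon$. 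Under the substitution $u=\epsilon r$, the integral in (\ref{uniformaveragestatement1})--(\ref{uniformaveragestatement2}) becomes $\int_0^t(\cdots)\,du = \epsilon\,\Phi(t/\epsilon)$, where $\Phi(s)=\int_0^s\phi(r)\,dr$ and $\phi(r)=k_j(z_r)_j(z_{r-c_p})_{\ell}-E\bigl[k_j(z_r)_j(z_{r-c_p})_{\ell}\bigr]$ is stationary and mean zero; by the explicit stationary OU covariance and by independence of distinct components, the subtracted constant equals $\tfrac12 e^{-c_p/k_j}$ when $j=\ell$ and $0$ when $j\neq\ell$, exactly matching the centerings in the two statements. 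Hence $\sup_{0\le t\le T}\bigl|\int_0^t(\cdots)\,du\bigr| = \epsilon\,\sup_{0\le s\le T/\epsilon}|\Phi(s)|$, and it suffices to prove $E\bigl[\sup_{0\le s\le T/\epsilon}|\Phi(s)|^2\bigr]=O(\epsilon^{-1})$.

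Next I would discretize in the rescaled time. Put $N=\lceil T/\epsilon\rceil$ and $\zeta_n=\Phi(n+1)-\Phi(n)=\int_n^{n+1}\phi(r)\,dr$, a stationary, mean-zero sequence (a fixed functional of $\bm{z}$ over the window $[n-c_p,n+1]$ whose law is $\epsilon$-independent), with partial sums $\tilde{S}_k=\sum_{n=0}^{k-1}\zeta_n=\Phi(k)$. Since $\Phi$ is continuous,
\begin{align*}
\sup_{0\le s\le T/\epsilon}|\Phi(s)| &\le \max_{0\le k\le N}|\Phi(k)| + \max_{0\le k\le N-1}\int_k^{k+1}|\phi(r)|\,dr \\
&= \max_{1\le k\le N}|\tilde{S}_k| + \max_{0\le k\le N-1}\eta_k ,
\end{align*}
where $\eta_k=\int_k^{k+1}|\phi(r)|\,dr$. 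The fluctuation term is harmless: by Cauchy--Schwarz and stationarity, $E[\eta_k^2]\le E\bigl[\int_k^{k+1}|\phi(r)|^2\,dr\bigr]=E[\phi(0)^2]<\infty$ (finite because $\bm{z}$ is Gaussian, cf.\ Lemma~\ref{lemma:ey2toConstant}), so $E\bigl[\max_k\eta_k^2\bigr]\le\sum_{k=0}^{N-1}E[\eta_k^2]=O(N)$. It therefore remains to show $E\bigl[\max_{1\le k\le N}\tilde{S}_k^2\bigr]=O(N)$.

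This last bound is where the main work lies, and I would obtain it from the Peligrad--Utev maximal inequality for stationary sequences, whose hypothesis must be verified for $\{\zeta_n\}$. Because $\bm{z}$ is jointly Gaussian, Wick's theorem expresses $E[\phi(r)\phi(r')]$ as a sum of products of Ornstein--Uhlenbeck covariances, each decaying exponentially in the gap $|r-r'|$ (with rates $k_j^{-1}$, $k_{\ell}^{-1}$); integrating over the unit cells yields $|E[\zeta_0\zeta_m]|\le Ce^{-cm}$ for some $c>0$, and hence $\|\tilde{S}_N\|_2^2=\sum_{n,n'}E[\zeta_n\zeta_{n'}]=O(N)$. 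Likewise, writing $(z_r)_j=e^{-r/k_j}(z_0)_j+(\text{independent of }\mathcal{F}_0)$ with $\mathcal{F}_0$ the $\sigma$-field generated by $\bm{z}$ up to time $0$ and using the Markov property of $\bm{z}$, one finds that $\|E[\zeta_k\mid\mathcal{F}_0]\|_2$ decays exponentially as $k\to\infty$, whence $\sum_{k\ge0}\|E[\zeta_k\mid\mathcal{F}_0]\|_2<\infty$; consequently $\|E[\tilde{S}_n\mid\mathcal{F}_0]\|_2$ is bounded uniformly in $n$ and $\sum_{n\ge1}n^{-3/2}\|E[\tilde{S}_n\mid\mathcal{F}_0]\|_2<\infty$, so the Peligrad--Utev inequality gives $E\bigl[\max_{1\le k\le N}\tilde{S}_k^2\bigr]\le CN$. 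Combining the pieces, $E\bigl[\bigl(\sup_{0\le t\le T}\bigl|\int_0^t(\cdots)\,du\bigr|\bigr)^2\bigr]=\epsilon^2\,E\bigl[\sup_{0\le s\le T/\epsilon}|\Phi(s)|^2\bigr]\le 2\epsilon^2\bigl(E[\max_k\tilde{S}_k^2]+E[\max_k\eta_k^2]\bigr)=O(\epsilon^2 N)=O(\epsilon)\to 0$, proving both (\ref{uniformaveragestatement1}) and (\ref{uniformaveragestatement2}). The step I expect to be the genuine obstacle is the precise formulation and application of the Peligrad--Utev inequality together with the verification of its hypothesis, i.e.\ extracting the exponential decay of $\|E[\zeta_k\mid\mathcal{F}_0]\|_2$ from the Gaussian--Markov structure; by comparison the boundary contribution near $t=0$ (where $u-c_p\epsilon<0$) is minor, since for $\epsilon\le|t_-|/c_p$ the relevant values of $\bm{y}^{\epsilon}$ still lie in its stationary regime.
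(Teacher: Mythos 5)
Your proposal is correct and rests on the same two pillars as the paper's proof --- the rescaling $(z_r)_j=\sqrt{\epsilon}\,(y^{\epsilon}_{\epsilon r})_j$ to an $\epsilon$-independent stationary OU process, followed by the Peligrad--Utev maximal inequality (Lemma~\ref{lemma:Peligrad}) --- but the execution differs in two genuine ways. First, the discretization: the paper approximates the rescaled integral by a point-evaluation Riemann sum on a grid of $m=N^2$ points of spacing $NT/m=T\epsilon$, which forces a separate Wick-theorem estimate of the mean-square Riemann approximation error (the $1-e^{-C_2 NT/m}$ bound) plus a remainder term; you instead cut the integral into unit-length blocks $\zeta_n=\int_n^{n+1}\phi$, so the only price for passing from the continuous supremum to the discrete maximum is the crude but sufficient bound $E[\max_k\eta_k^2]\le\sum_k E[\eta_k^2]=O(N)$, and the approximation-error analysis disappears entirely. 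Second, the verification of the Peligrad--Utev hypothesis: the paper bounds $\Delta_r$ through unconditional second moments of partial sums, computing $E[S_{2^j}^2]$ term by term with Wick's theorem and summing the exponentially decaying covariances, which yields $\Delta_r\le C r\sqrt{m/NT}$ and requires the final tuning $m=N^2$ to kill the resulting $\log$-factors; you bound $\|E[\zeta_k\mid\mathcal{F}_0]\|_2$ directly from the Gaussian--Markov decomposition $(z_r)_j=e^{-r/k_j}(z_0)_j+(\text{indep.})$, obtaining summable (indeed exponentially decaying) conditional norms, hence $\Delta_r=O(1)$ uniformly and $E[\max_{k\le N}\tilde S_k^2]=O(N)$ with no parameter to tune. (Your computation of that conditional decay is right: conditioning kills all Wick pairings except the one carrying the factor $e^{-r/k_j}e^{-(r-c_p)/k_\ell}$, and the finitely many indices with $k<c_p+1$ are handled by $\|E[\zeta_k\mid\mathcal{F}_0]\|_2\le\|\zeta_k\|_2$.) The net rate is the same, $O(\epsilon)$ up to logarithms, but your route is shorter and makes the mechanism --- $O(1/\epsilon)$ weakly dependent mean-zero blocks, each of unit size --- more transparent; the paper's route has the minor advantage of invoking Lemma~\ref{lemma:Peligrad} exactly in the unconditional form in which it states it, whereas yours uses the conditional-expectation form of Peligrad--Utev's Proposition 2.3, so you should state that version explicitly and fix the filtration $\mathcal{F}_n=\sigma(z_s:s\le n+1)$ (with respect to which $\zeta_n$ is actually adapted) when you write it up.
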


\vspace{10pt}

To show this, a Riemann sum approximation is used in order
to apply the following maximal inequality for sums of a stationary sequence of random variables. 

\begin{lemma}\cite[Proposition 2.3]{peligrad2005}
\label{lemma:Peligrad}
Let $\{X_i \; : \; i \in \mathbb{N} \}$ be a stationary sequence of random variables and let $S_n = \sum _{i = 1}^n X_i$.  Let $\mathcal{F}_0$ be the $\sigma$-algebra generated by $X_1$. Let $n \in \mathbb{N}$ and let $r = \lceil \log _2 (n) \rceil$ (i.e., $r$ is the smallest integer greater than or equal to $\log _2 (n)$).  Then
\begin{equation}
\label{maximal inequality1}
E \left[ \max_{1 \leq i \leq n} S_i ^2 \right] \leq n \left( 2 \sqrt{E \big[ X_1 ^2 \big]} + \left( 1 + \sqrt{2} \right) \Delta _r \right) ^2 
\end{equation}
where
$$\Delta _r = \sum _{j = 0} ^{r - 1} \frac{\sqrt{E\left [E \big[ S_{2^j} | \mathcal{F}_0\big]^2\right ]}}{2^{j / 2}} \; \; . $$
\end{lemma}

\begin{remark}
By the conditional Jensen's inequality, $ E \big[ S_{2^j} | \mathcal{F}_0\big]^2  \leq E \big[ S_{2^j} ^2 | \mathcal{F}_0\big]$.
Thus, (\ref{maximal inequality1}) implies
\begin{equation}
\label{maximal inequality2}
E \left[ \max_{1 \leq i \leq n} S_i ^2 \right] \leq n \left( 2 \sqrt{E \big[ X_1 ^2 \big]} + \left( 1 + \sqrt{2} \right) 
      \tilde{\Delta} _r \right) ^2 
\end{equation}
where
\begin{equation*}
\tilde{\Delta} _r 
   = \sum _{j = 0} ^{r - 1} \frac{\sqrt{E\big [ S_{2^j} ^2 \big ]}}{2^{j / 2}} \; \; .
\end{equation*}
\end{remark}

\begin{proof}[Proof of Lemma~\ref{lemma:uniformaverage}]
Let 
$$\Psi _{j \ell p} = E \left[k_j \epsilon (y^{\epsilon}_u)_j (y^{\epsilon}_{u - c _p \epsilon})_{\ell} \right] =  \frac{\delta _{j \ell}}{2} e^{- \frac{c_p}{k_j}} \; .$$
Then showing both (\ref{uniformaveragestatement1}) and (\ref{uniformaveragestatement2}) is equivalent to showing that for all $j, \ell ,$ and $p$,
\begin{equation}
\label{uniformaveragecombinedstatement}
\lim _{\epsilon \rightarrow 0} E \left[ \left(\rule{0cm}{.7cm}\right. \sup_{0 \leq t \leq T} \Bigg| \int_0^t  \left( k_j \epsilon (y^{\epsilon}_u)_j (y^{\epsilon}_{u - c _p \epsilon})_{\ell} - \Psi _{j \ell p} \right) du \Bigg| \left.\rule{0cm}{.7cm}\right) ^2 \right] = 0 \; .
\end{equation}
Define the new process $\bm{z} = \big( (z)_1, ... , (z)_n \big)$ by defining each component 
$(z)_j$ by $(z_r)_j = \sqrt{\epsilon} (y^{\epsilon}_{\epsilon r})_j$.  Then $(z)_j$ solves the
$\epsilon$-independent equation
$$d(z_r)_j = - \frac{1}{k_j} (z_r)_j dr + \frac{1}{k_j} d \big( \tilde{W} _ r \big)_j \; , \quad (z_0)_j = \sqrt{\epsilon}(y_0^\epsilon)_j$$
with the Wiener process $\tilde{W}_r = \epsilon^{-1/2}W_{\epsilon r}$.
Letting $r = u / \epsilon$ and $N = 1 / \epsilon$ gives
\begin{align*} 
\int_0^t  \left( k_j \epsilon (y^{\epsilon}_u)_j (y^{\epsilon}_{u - c_p \epsilon})_{\ell} - \Psi _{j \ell p} \right) du & = \epsilon \int_0^{t / \epsilon}  \left( k_j \epsilon (y^{\epsilon}_{\epsilon r})_j (y^{\epsilon}_{\epsilon r - c_p \epsilon})_{\ell} - \Psi _{j \ell p} \right) dr \\[3pt]
& = \frac{1}{N} \int_0^{Nt}  \Big( k_j (z_r)_j (z_{r - c_p})_{\ell} - \Psi _{j \ell p} \Big) dr \; .
\end{align*}
We approximate the above integral by a Riemann sum.  Let $\{t_i\}_{i=1,...,m}$ be the partition of $[0,NT]$ into $m$ equal parts of size
$\Delta t = t_{i+1}-t_{i}$, where $t_1 = 0$ and $t_{i+1}-t_{i}= NTm^{-1}$ for all $i$. Near the end of the proof, we will choose $m$ in terms of $N = 1 / \epsilon$.  Let $b(t) = \max \{i \; : \; iNTm^{-1} \leq Nt \}$.  We add and subtract the corresponding Riemann sum and use the Cauchy-Schwarz inequality to obtain
\begin{align}
E \left[\rule{0cm}{.7cm}\right. & \sup_{0\leq t\leq T} \Bigg| \int_0^t \Big( k_j \epsilon(y_u^\epsilon)_j(y^{\epsilon}_{u-c_p\epsilon})_{\ell} - \Psi _{j \ell p} \Big) du \Bigg| ^2 \left.\rule{0cm}{.7cm}\right] \nonumber \hspace{170pt} \\[10pt]
\label{Lemma6eq1}
 & \leq 2 E \left[\rule{0cm}{.8cm}\right. \sup_{0\leq t\leq T} \left|\rule{0cm}{.7cm}\right.  \frac{1}{N} \left(\rule{0cm}{.7cm}\right. \int_0^{Nt} \Big( k_j (z_r)_j (z_{r-c_p})_{\ell} - \Psi _{j \ell p} \Big) dr  \nonumber \\
& \hspace{110pt} - \left. \left. \left. \frac{NT}{m}\sum_{i=1}^{b(t)} \Big( k_j(z_{t_i})_j(z_{t_{i}-c_p})_{\ell} - \Psi _{j \ell p} \Big) \right) \right |^2\right] \nonumber \\
& \hspace{10pt} + \frac{2}{N ^2} E\left [\sup_{0\leq t\leq T}\left |  \frac{NT}{m}\sum_{i=1}^{b(t)}\Big( k_j(z_{t_i})_j(z_{t_{i}-c_p})_{\ell} - \Psi _{j \ell p} \Big) \right| ^2 \right] \; .
\end{align}
Each of the two terms on the right-hand side converges to zero for a different reason. The first term goes to zero because as $m$ increases, the 
Riemann sum better approximates the integral.  The second term goes to zero because the sum grows like $\sqrt{m}$ since it is a sum of on the order of $m$ weakly correlated, mean-zero random variables.

We will start with the first term. First, writing
$$ \frac{NT}{m} \Big( k_j(z_{t_i})_j (z_{t_i-c_p})_{\ell} - \Psi _{j \ell p}  \Big) =  \int_{t_i}^{t_{i+1}}\Big( k_j(z_{t_i})_j(z_{t_i-c_p})_{\ell} - \Psi _{j \ell p} \Big) \; dr$$
we have
\begin{align}
 E & \left[\rule{0cm}{.8cm}\right. \sup_{0\leq t\leq T} \left|\rule{0cm}{.7cm}\right.  \frac{1}{N} \int_0^{Nt} 
         \Big( k_j(z_r)_j(z_{r-c_p})_{\ell} - \Psi _{j \ell p} \Big) \;dr \notag \hspace{150pt} \\
   & \hspace{125pt} - \frac{T}{m}\sum_{i=1}^{b(t)} \Big( k_j(z_{t_i})_j (z_{t_{i}-c_p})_{\ell} -\Psi _{j \ell p} \Big) \left.\rule{0cm}{.7cm}\right| ^2 \left.\rule{0cm}{.8cm}\right] \notag \hspace{15pt} \\[10pt]
 & = E \left[\rule{0cm}{.8cm}\right. \sup_{0\leq t\leq T} \left|\rule{0cm}{.7cm}\right. \frac{1}{N}\sum_{i=1}^{b(t)}\int_{t_i}^{t_{i+1}} \bigg[ 
       \Big( k_j(z_r)_j(z_{r-c_p})_{\ell} - \Psi _{j \ell p} \Big) \notag \\
 & \hspace{135pt} -  \Big( k_j(z_{t_i})_j(z_{t_{i}-c_p})_{\ell} -\Psi _{j \ell p} \Big) \bigg] \;dr \notag \\
 & \hspace{125pt} + \frac{1}{N} \int_{t_{b(t)}}^{Nt} \Big( k_j(z_r)_j(z_{r-c_p})_{\ell} - \Psi _{j \ell p} \Big) dr \left|\rule{0cm}{.7cm}\right. ^2 \left.\rule{0cm}{.8cm}\right] \notag \\[5pt]
& \leq  2 E \left[\rule{0cm}{.8cm}\right. \sup_{0\leq t\leq T} \left|\rule{0cm}{.7cm}\right. \frac{1}{N}\sum_{i=1}^{b(t)}\int_{t_i}^{t_{i+1}} \bigg[ 
   \Big( k_j(z_r)_j(z_{r-c_p})_{\ell} - \Psi _{j \ell p} \Big) \notag \\
 & \hspace{150pt} -  \Big( k_j(z_{t_i})_j(z_{t_{i}-c_p})_{\ell} - \Psi _{j \ell p} \Big) \bigg] \;dr \left|\rule{0cm}{.7cm}\right. ^2 \left.\rule{0cm}{.8cm}\right] \notag \\
 & \hspace{10pt} + 2 E \left[\rule{0cm}{.8cm}\right. \sup_{0\leq t\leq T} \left|\rule{0cm}{.7cm}\right. \frac{1}{N} \int_{t_{b(t)}}^{Nt} \Big( k_j(z_r)_j(z_{r-c_p})_{\ell} - \Psi _{j \ell p} \Big) dr \left|\rule{0cm}{.7cm}\right. ^2 \left.\rule{0cm}{.8cm}\right] .  \label{uniformterms1and2}
 \end{align}
We first bound the first term on the right-hand side of (\ref{uniformterms1and2}).  By the Cauchy--Schwarz inequality,
\begin{align*}
E \left[\rule{0cm}{.8cm}\right. & \sup_{0\leq t\leq T} \left|\rule{0cm}{.7cm}\right. \frac{1}{N}\sum_{i=1}^{b(t)}\int_{t_i}^{t_{i+1}} \bigg[ \Big (k_j(z_r)_j(z_{r-c_p})_{\ell} - \Psi _{j \ell p} \Big) \\
& \hspace{150pt} -  \Big( k_j (z_{t_i})_j (z_{t_{i}-c_p})_{\ell} - \Psi _{j \ell p} \Big) \bigg] \;dr \left|\rule{0cm}{.7cm}\right. ^2 \left.\rule{0cm}{.8cm}\right] \\[5pt]
& \leq E \left[\rule{0cm}{.8cm}\right. \left(\rule{0cm}{.7cm}\right. \frac{1}{N}\sum_{i=1}^{b(T)}\int_{t_i}^{t_{i+1}} \bigg| \Big( k_j(z_r)_j(z_{r-c_p})_{\ell} - \Psi _{j \ell p} \Big) \\
& \hspace{150pt} -  \Big( k_j(z_{t_i})_j(z_{t_{i}-c_p})_{\ell} - \Psi _{j \ell p} \Big) \bigg| \;dr \left.\rule{0cm}{.7cm}\right) ^2 \left.\rule{0cm}{.8cm}\right] \\[3pt]
& \leq  \frac{1}{N ^2} b(T) \Delta t \sum_{i=1}^{b(T)}\int_{t_i}^{t_{i+1}} E \left[\rule{0cm}{.7cm}\right. \Bigg( \Big( k_j(z_r)_j(z_{r-c_p})_{\ell} - \Psi _{j \ell p} \Big) \\
& \hspace{160pt} -  \Big( k_j(z_{t_i})_j(z_{t_{i}-c_p})_{\ell} - \Psi _{j \ell p} \Big) \Bigg) ^2 \left.\rule{0cm}{.7cm}\right] \;dr \; .
\end{align*}
Note that $b(T) = m$.
 We compute the expectation that makes up the integrand. First we expand the square:
\begin{align*}
	E & \left[ \Bigg( \Big( k_j(z_r)_j(z_{r-c_p})_{\ell} - \Psi _{j \ell p} \Big) - \Big( k_j(z_{t_i})_j(z_{t_{i}-c_p})_{\ell} - \Psi _{j \ell p} \Big) \Bigg) ^2 \right] \hspace{80pt} \\[3pt]
	& \hspace{40pt} = E\left[ \Big(	k_j(z_r)_j (z_{r-c_p})_{\ell} - \Psi _{j \ell p} \Big) ^2 \right ]
	+ E \left[ \Big( k_j (z_{t_i})_j (z_{t_{i}-c_p})_{\ell} - \Psi _{j \ell p} \Big)^2\right ]\\[3pt]
	& \hspace{50pt} - 2E\bigg[ \Big( k_j(z_r)_j (z_{r-c_p})_{\ell} - \Psi _{j \ell p} \Big) \Big( k_j (z_{t_i})_j (z_{t_{i}-c_p})_{\ell} - \Psi _{j \ell p} \Big)\bigg] \; . 
\end{align*}
By using Wick's Theorem \cite{janson} to compute each expectation, we obtain
\begin{align*}
E & \left [ \Bigg( \Big( k_j(z_r)_j (z_{r-c_p})_{\ell} - \Psi _{j \ell p} \Big) - \Big( k_j (z_{t_i})_j (z_{t_{i}-c_p})_{\ell} - \Psi _{j \ell p} \Big) \Bigg) ^2 \right] \hspace{65pt} \\[3pt]
& \hspace{25pt} = \frac{k_j}{2 k_{\ell}} \left( 1-e^{-\frac{|r-t_i|}{k_j}} e^{-\frac{|r-t_i|}{k_{\ell}}} \right) + \frac{\delta _{j \ell}}{2} \left( e^{-2\frac{c_p}{k_j}} -e^{-\frac{|r-t_i + c_p|}{k_j}} e^{-\frac{|r-t_i - c_p|}{k_j}} \right) . 
\end{align*}
Note that $r\geq t_i$ and $r-t_i \leq \Delta t$, so that for $\Delta t < \min_p c_p$  ,
$$ - \frac{|r-t_i + c_p|}{k_j} - \frac{ |r-t_i - c_p|}{k_j} = - \frac{r - t_i + c_p}{k_j} + \frac{r - t_i - c_p}{k_j} = - 2 \frac{c_p}{k_j} \; . $$
Therefore, letting $C_1 = \max_{j, \ell} \frac{k_j}{2 k_{\ell}}$ and $C_2 = \max_{j, \ell} \frac{k_j + k_{\ell}}{k_j k_{\ell}}$, we have, for $\Delta t = \frac{NT}{m}$ sufficiently small,
\begin{align*}
& E\left [ \Bigg( \Big( k_j(z_r)_j (z_{r-c_p})_{\ell} - \Psi _{j \ell p} \Big) - \Big( k_j(z_{t_i})_j(z_{t_{i}-c_p})_{\ell} - \Psi _{j \ell p} \Big) \Bigg) ^2 \right ] \hspace{60pt} \\
 & \hspace{245pt} \leq C_1 \left( 1-e^{- C_2 \frac{NT}{m}} \right) . 
\end{align*}
Thus, for $N / m$ sufficiently small, we have the following bound for the first term in (\ref{uniformterms1and2}):
\begin{align}
2E & \left[\rule{0cm}{.8cm}\right. \sup_{0\leq t\leq T} \left|\rule{0cm}{.7cm}\right. \frac{1}{N} \sum_{i=1}^{b(t)}\int_{t_i}^{t_{i+1}} \bigg[ \Big (k_j(z_r)_j(z_{r-c_p})_{\ell} - \Psi _{j \ell p} \Big) \notag \hspace{150pt} \\
& \hspace{140pt} -  \Big( k_j (z_{t_i})_j (z_{t_{i}-c_p})_{\ell} - \Psi _{j \ell p} \Big) \bigg] \;dr \left|\rule{0cm}{.7cm}\right. ^2 \left.\rule{0cm}{.8cm}\right] \notag \\
& \leq \frac{2}{N ^2} m \Delta t \sum_{i=1}^{m}\int_{t_i}^{t_{i+1}}  E \left[\rule{0cm}{.7cm}\right. \Bigg( \Big( k_j(z_r)_j(z_{r-c_p})_{\ell} - \Psi _{j \ell p} \Big) \notag \\
& \hspace{140pt} -  \Big( k_j(z_{t_i})_j(z_{t_{i}-c_p})_{\ell} - \Psi _{j \ell p} \Big) \Bigg) ^2 \left.\rule{0cm}{.7cm}\right] \;dr \notag \\
&  \leq \frac{2}{N ^2} m ^2 (\Delta t)^2 C_1 \left( 1-e^{- C_2 \frac{NT}{m}} \right) \notag \\[5pt]
\label{uniformfirsttermbound}
& = 2C_1 T^2 \left( 1-e^{- C_2 \frac{NT}{m}} \right)  .
 \end{align}

We now bound the second term in (\ref{uniformterms1and2}):
\begin{align*}
2 E \left[\rule{0cm}{.7cm}\right. \sup_{0\leq t\leq T} \Bigg| \frac{1}{N} & \int_{t_{b(t)}}^{Nt} \Big( k_j(z_r)_j(z_{r-c_p})_{\ell} - \Psi _{j \ell p} \Big) dr \Bigg| ^2 \left.\rule{0cm}{.7cm}\right] \hspace{120pt} \\[3pt]
& \hspace{5pt}\leq \frac{2}{N^2} E \left[ \max_{1 \leq i \leq m} \Bigg( \int_{t_i}^{t_{i + 1}} \Big| k_j (z_r)_j (z_{r-c_p})_{\ell} - \Psi _{j \ell p} \Big| dr \Bigg) ^2 \right]
\end{align*}
\begin{align}
 \hspace{30pt} & \leq \frac{2}{N^2} E \left[ \sum_{i = 1}^m \Bigg( \int_{t_i}^{t_{i + 1}} \Big| k_j (z_r)_j (z_{r-c_p})_{\ell} - \Psi _{j \ell p} \Big| dr \Bigg) ^2 \right] \notag \\
& \leq \frac{2}{N^2}C_3 m(\Delta t)^2 \notag \\[5pt]
\label{uniformsecondtermbound}
& = \frac{2 C_3 T^2}{m} \; .
\end{align}

Together, (\ref{uniformfirsttermbound}) and (\ref{uniformsecondtermbound}) give a bound for the first term in (\ref{Lemma6eq1}). We now turn our attention to the second term in (\ref{Lemma6eq1}).  To bound this term we will use Lemma \ref{lemma:Peligrad}.  First, we note
\begin{align*}
\frac{1}{N ^2} E & \left [\sup_{0\leq t\leq T}\left |  \frac{NT}{m}\sum_{i=1}^{b(t)}\Big( k_j(z_{t_i})_j (z_{t_{i}-c_p})_{\ell} - \Psi _{j \ell p} \Big) \right| ^2 \right] \hspace{125pt} \\[5pt]
 & \hspace{90pt} =  E\left [\max_{0\leq n \leq m}\left |  \frac{T}{m} \sum_{i=1}^{n} \Big( k_j(z_{t_i})_j (z_{t_{i}-c_p})_{\ell} - \Psi _{j \ell p} \Big) \right| ^2 \right].
\end{align*}
Using the notation of Lemma \ref{lemma:Peligrad}, let $X_i = k_j(z_{t_i})_j (z_{t_{i}-c_p})_{\ell} - \Psi _{j \ell p}$.  Note 
that $\{X_i \}$ is a stationary sequence since $\bm{z}$ is a stationary process.
We now estimate the quantity $\tilde{\Delta} _r$ in (\ref{maximal inequality2}).  Consider the Riemann sum
$$ \frac{NT}{m} \sum _{i = 1} ^{m}  \Big( k_j (z_{t_i})_j (z_{t_i - c_p})_{\ell} - \Psi _{j \ell p} \Big) \; . $$
We have
\begin{equation*}
E \left[ \left( \frac{NT}{m}  \sum_{i = 1}^{m}  \Big( k_j (z_{t_i})_j (z_{t_i - c_p})_{\ell} - \Psi _{j \ell p} \Big)  \right) ^2 \right] \hspace{180pt} 
\end{equation*}
$$ =  E \left[\rule{0cm}{.7cm}\right. \sum_{i = 1}^{m} \frac{NT}{m}  \sum_{q = 1}^{m} \frac{NT}{m} \Big( k_j (z_{t_i})_j (z_{t_i - c_p})_{\ell} - \Psi _{j \ell p} \Big) \Big( k_j (z_{t_q})_j (z_{t_q - c_p})_{\ell} - \Psi _{j \ell p} \Big) \left.\rule{0cm}{.7cm}\right] $$
$$ = \sum_{i = 1}^{m} \frac{NT}{m}  \sum_{q = 1}^{m} \frac{NT}{m} E \Bigg[ k_j^2 (z_{t_i})_j (z_{t_i - c_p})_{\ell} (z_{t_q})_j (z_{t_q - c_p})_{\ell} -  \Psi _{j \ell p} k_j (z_{t_i})_j (z_{t_i - c_p})_{\ell} \hspace{9pt} $$
$$ \hspace{193pt} - \hspace{2pt} \Psi _{j \ell p} k_j (z_{t_q})_j (z_{t_q - c_p})_{\ell} + \left( \Psi _{j \ell p} \right) ^2 \Bigg] $$
$$ = \sum_{i = 1}^{m} \frac{NT}{m}  \sum_{q = 1}^{m} \frac{NT}{m} \Bigg[ \frac{k_j}{4k_{\ell}} e^{-  \frac{|t_i - t_q|}{k_j}} e^{- \frac{|t_i - t_q|}{k_{\ell}}} +  \frac{\delta_{j \ell}}{4}  e^{- \frac{|t_i - t_q + c_p|}{k_j}} e^{- \frac{|t_i - t_q - c_p|}{k_j}} \Bigg] \hspace{30pt} $$
$$\leq C_4 NT \hspace{300pt}$$ 
where $C_4$ does not depend on $N, m,$ or $T$, by comparison with the corresponding integral (recall that ${NT \over m} = t_{i+1} - t_i$).
Thus,
$$E \left[\rule{0cm}{.7cm}\right. \left( \sum_{i = 1}^{m}  \Big( k_j (z_{t_i})_j (z_{t_i - c_p})_{\ell} - \Psi _{j \ell p} \Big)  \right) ^2 \left.\rule{0cm}{.7cm}\right] \leq C_4 \frac{m^2}{NT} \; .$$
Applying the last bound to partial sums $S_{2^j} = \sum_{i=1}^{2^j} X_i$, with $NT$ replaced by ${2^j \over m}NT$, we obtain
$$
\tilde{\Delta} _r 
   = \sum _{j = 0} ^{r - 1} \frac{\sqrt{E\big [ S_{2^j} ^2 \big ]}}{2^{j / 2}} \leq C_5 \sum_{j=0}^{r-1}{2^j \over \sqrt{{2^j \over m}NT}}2^{-j/2} = C_5 r \sqrt{m \over NT} \; .
$$
Thus, by Lemma \ref{lemma:Peligrad}
\begin{align*}
& E \left[ \max_{1 \leq n \leq m} \left( \sum_{i = 1}^{n}  \Big( k_j (z_{t_i})_j (z_{t_i - c_p})_{\ell} - \Psi _{j \ell p} \Big)  \right)^2 \right] \hspace{160pt} \\[3pt]
& \hspace{145pt} \leq m \Bigg( 2 C_6 + \left( 1 + \sqrt{2} \right) C_5 \sqrt{\frac{m}{NT}} \log  _2 (m) \Bigg)^2 
\end{align*}
where $C_6 = \left( E\left[ \Big( k_j (z_{t_i})_j (z_{t_i - c_p})_{\ell} - \Psi _{j \ell p} \Big)^2 \right] \right) ^{1 / 2}$.  From this we obtain the following, which provides a bound for the second term in (\ref{Lemma6eq1}):
\begin{align}
&E \left[ \max_{1 \leq n \leq m} \left( \frac{T}{m} \sum_{i = 1}^{n}  \Big( k_j (z_{t_i})_j (z_{t_i - c_p})_{\ell} - \Psi _{j \ell p} \Big)  \right)^2 \right] \hspace{155pt} \notag \\[3pt]
\label{uniformthirdtermbound}
& \hspace{105pt} \leq \frac{T^2}{m} \Bigg( 2 C_6 + \left( 1 + \sqrt{2} \right) C_5 \sqrt{\frac{m}{NT}} \log _2 (m) \Bigg)^2.
\end{align}

Putting together (\ref{Lemma6eq1}), (\ref{uniformterms1and2}), (\ref{uniformfirsttermbound}), (\ref{uniformsecondtermbound}), and (\ref{uniformthirdtermbound}), we have, for $N / m$ sufficiently small, 
\begin{align*}
E & \left [\sup_{0\leq t\leq T} \Bigg| \int_0^t \Big( k_j \epsilon(y_u^\epsilon)_j (y^{\epsilon}_{u-c_p\epsilon})_{\ell} - 
\Psi _{j \ell p} \Big) \;du \Bigg| ^2 \right] \hspace{170pt} \\[3pt]
 & \hspace{20pt} \leq 2 C_1 T^2 \left( 1-e^{- C_2 \frac{NT}{m}} \right) + \frac{2 C_3 T^2}{m} \\
 & \hspace{120pt} + \frac{2 T^2}{m} \Bigg( 2 C_6 + \left( 1 + \sqrt{2} \right) C_5 \sqrt{\frac{m}{NT}} \log _2 (m) \Bigg)^2 \; .
\end{align*}
Setting $m = N^2$ proves (\ref{uniformaveragecombinedstatement}) and thus (\ref{uniformaveragestatement1}) and (\ref{uniformaveragestatement2}).
\end{proof}

\subsubsection{Conditions \ref{condition} and \ref{condition2}}

Now we show that Conditions \ref{condition} and \ref{condition2} are satisfied. First we show that $\bm{H}^\epsilon$ satisfies Condition \ref{condition}. To do so, we must show that for every $j, p,$ and $\ell \neq j$,
\begin{equation*}
  \int _0 ^T \bigg| k _j \epsilon (y ^{\epsilon} _s)_j (y ^{\epsilon}_{s - c_p \epsilon })_j  - \frac{1}{2} e^{- \frac{c_p}{k_j}}  \bigg| ds 
\end{equation*}
and 
\begin{equation*}
  \int _0 ^T \Big| k _j \epsilon (y ^{\epsilon} _s)_j (y ^{\epsilon}_{s - c_p \epsilon })_{\ell} \Big| ds  
\end{equation*}
are stochastically bounded. This follows from Lemma~\ref{lemma:ey2toConstant} by the Chebyshev inequality, since the second moments of the integrands are bounded by a constant, independent of $\epsilon$. 

Condition \ref{condition2} is the requirement that $\bm{h}$ is continuous.  In view of the definition (\ref{hdefinition}) of $\bm{h}$, this is true by the assumptions of Theorem \ref{theorem} that $\bm{f}$ is continuous and that $\bm{\sigma}$ has continuous first derivatives.

We now complete the proof of Theorem \ref{theorem}.  From Section~\ref{sec:Utozero}, $ \bm{U} ^\epsilon\rightarrow \bm{0}$ in probability with respect to $C([0,T], \mathbb{R}^m)$. 
From Lemma~\ref{lemma:uniformaverage}, $\bm{H}^\epsilon\rightarrow
\bm{H}$ in $L^2$ with respect to $C([0,T], \mathbb{R}^{(1 + n + 1 + mn^2)})$. 
 Therefore, $(\bm{U}^\epsilon,\bm{H}^\epsilon)\rightarrow (\bm{0},\bm{H})$ in probability with respect to $C([0,T],\mathbb{R}^m \times \mathbb{R}^{(1 + n + 1 + mn^2)})$. Furthermore $\bm{H}^\epsilon$ satisfies Condition \ref{condition}
 by Lemma~\ref{lemma:ey2toConstant} and $\bm{h}$ satisfies Condition \ref{condition2}. Thus, by Lemma~\ref{theorem:KP}, $\bm{x}^\epsilon
 \rightarrow\bm{x}$ in probability with respect to $C([0,T],\mathbb{R}^m)$. 
\end{proof}
 
\section{Discussion}\label{sec:Discussion}
 
We have derived the limiting equation for a general stochastic differential delay equation with multiplicative colored noise as the time delays and correlation times of the noises go to zero at the same rate.  As a result of the dependence of the noise coefficients on the state of the system (multiplicative noise), a \emph{noise-induced drift} appears in the limiting equation.  This result is useful for applications as the limiting SDE could provide a model that is easier to analyze than the original equation and at the same time still accounts for the effects of the time delays through the coefficients of the noise-induced drift.

The noise-induced drift has a form analogous to that of the \emph{Stratonovich correction} to the It\^o equation with noise term $ \bm{\sigma} (\bm{x}_t) d \bm{W} _t$.  That is, the noise-induced drift is a linear combination of the terms $\sigma_{pj} (\bm{x} _t) \frac{\partial \sigma _{ij} }{\partial x_p} (\bm{x} _t)$ . The coefficients of this linear combination in the limiting equation~\eqref{limiting equation} are 
\begin{equation}\label{alpha}
\alpha _{jp} = {1\over2} e^{- \frac{\delta _p}{\tau_j}},
\end{equation}
whereas the coefficients of the Stratonovich correction would all be equal to $1 / 2$.
Thus, as explained in \cite{mcdaniel2016,pesce2013}, one can interpret the terms of the noise-induced drift as representing different stochastic integration conventions.  We note that the exponential factor in (\ref{alpha}) comes from the form (i.e., exponentially decaying) of the covariance function of the particular noise process that we use, that is, the Ornstein-Uhlenbeck process.

The limiting equation that we have derived here is a more accurate approximation of the delayed system than the limiting equation derived in \cite{mcdaniel2016,pesce2013}. In particular, in \cite{mcdaniel2016,pesce2013}, equation \eqref{SDDE} was Taylor expanded to first-order in the time delay and the limiting equation corresponding to the expanded equation was derived.  It was shown that this limiting equation contains a noise-induced drift $\bm{\tilde{S}}(\bm{x}_t)$ defined componentwise as
\begin{equation}
	\tilde{S} _i(\bm{x}) = \sum _{p,j} \frac{1}{2} \left( 1 + \frac{\delta _p}{\tau _j} \right)^{-1}  \sigma_{pj} (\bm{x}) \frac{\partial \sigma _{ij} }{\partial x_p} (\bm{x}) \; .
\end{equation}
The coefficient $\frac{1}{2} \left( 1 + \frac{\delta _p}{\tau _j} \right)^{-1}$ is a first-order Taylor expansion in the parameter $\delta _p / \tau _j$ of the coefficient ${1\over2} e^{- \delta _p / \tau_j}$ in (\ref{eq:Sdef}) in the sense that $\left( 1 + \frac{\delta _p}{\tau _j} \right)^{-1}$ is obtained when one expands the denominator of $1/e^{\delta_p / \tau _j}$ to first-order in $\delta _p / \tau _j$ .  Thus, while the two limiting equations are close when all the ratios $\delta _p / \tau _j$ are small, the limiting equation derived here is overall a better approximation of the delayed system.

\section{Conclusion}\label{sec:Conclusion}

We have proven a result concerning convergence of the solution of a general SDDE driven by state-dependent colored noise to the solution of an SDE driven by white noise. The main theorem 
(Theorem \ref{theorem}) was proven using direct analysis of the SDDE without Taylor expansion and thus it improves upon a result from previous works. The 
resulting limiting equation \eqref{limiting equation} was compared to the previous results in  \cite{mcdaniel2016,pesce2013}. The noise-induced drift derived there was seen to be a first-order expansion in the ratios $\delta_p/\tau_j$ of the one found here.  The limiting equation derived here can be used as an approximation to study the dynamics of real systems modeled by the SDDE.

\section*{Acknowledgements}
A.M. and J.W. were partially supported by the NSF grants DMS 1009508 and DMS 0623941.

\bibliographystyle{plain} 
\bibliography{sde_bib2}

\end{document}